\pdfoutput=1
\documentclass{acm_proc_article-sp}

\setlength{\paperheight}{11in}
\setlength{\paperwidth}{8.5in}
\usepackage[
  pass,
]{geometry}

\usepackage{hyperref}
\usepackage{mathtools}
\usepackage{booktabs}
\usepackage{graphicx,color}

\usepackage{algorithm}
\usepackage[noend]{algorithmic}

\usepackage{ntheorem}

\usepackage{caption}
\usepackage{subcaption}

\makeatletter
\g@addto@macro{\UrlBreaks}{\UrlOrds}
\makeatother

\theoremstyle{plain}
\newtheorem{definition}{Definition}

\theoremstyle{plain}
\newtheorem{theorem}{Theorem}

\theoremstyle{plain}

\theoremstyle{plain}

\theoremstyle{plain}
\newtheorem{lemma}{Lemma}

\theoremstyle{plain}

\newcommand{\mc}[1]{\mathcal{#1}}

\begin{document}\sloppy

\title{On the Unicity of Smartphone Applications}

\numberofauthors{3} 
\author{
\alignauthor
Jagdish Prasad Achara\\
       \affaddr{INRIA}\\
       \email{jagdish.achara@inria.fr}
\alignauthor
Gergely Acs\\
       \affaddr{INRIA}\\
       \email{gergely.acs@inria.fr}
\alignauthor 
Claude Castelluccia\\
       \affaddr{INRIA}\\
       \email{claude.castelluccia@inria.fr}
}

\maketitle

\begin{abstract}
Prior works have shown that the list of apps installed by a user reveal a lot about user interests and behavior.
These works rely on the semantics of the installed apps and show that various user traits could be  learnt automatically using off-the-shelf machine-learning techniques. 
In this work, we focus on the re-identifiability issue and thoroughly study the unicity of smartphone apps on a dataset containing 54,893 Android users collected over a period of 7 months.
Our study finds that any 4 apps installed by a user are enough (more than 95\% times) for the re-identification of the user in our dataset.
As the complete list of installed apps is unique for 99\% of the users in our dataset, it can be easily used to track/profile the users by a service such as Twitter that has access to the whole list of installed apps of users.
As our analyzed dataset is small as compared to the total population of Android users, we also study how unicity would vary with larger datasets.
This work emphasizes the need of better privacy guards against collection, use and release of the list of installed apps.


\end{abstract}

\vspace{1mm}
\noindent
{\bf Categories and Subject Descriptors:} K.4 {[Public Policy Issues]}: {Privacy}


\section{Introduction}
\label{sec:intro}


People are all unique the way they are or they look. 
They can be easily identified from their DNA sequences, fingerprints, Iris scans, web browsers and so on. 
Also, a combination of various attributes, such as their age, address or religion \cite{Sweeney2001} makes them unique.
Recently, some studies have shown that people are also unique in the way they behave.
For example, de Montjoye \emph{et al.} illustrated this \emph{behavioural uniqueness} by showing that people are unique in the way they move around~\cite{Nature13}. 
In fact, they show that only four spatio-temporal positions are enough to uniquely identify a user 95\% of the times in a dataset of one and a half million users. 
Similarly, other studies showed that people are unique in the way they purchase goods online~\cite{Science15} or configure their browser~\cite{Eckersley10} or browse the web~\cite{Olejnik13}. 



As smartphones have been widely deployed today all over the world and the list of installed/running applications (apps) on them is readily available to be accessed, the threat in terms of user privacy is huge if this data is collected, used and released without sufficient diligence in terms of privacy.
This threat comes in two flavors: first, the semantics of the installed apps can tell a lot about the users' habits and interests \cite{Seneviratne:2014}, and second, the unicity of installed apps could make a user re-identifiable if this dataset is released.
In fact, regarding the first privacy threat, \cite{Seneviratne:2014} showed that user traits such as religion, relationship status, spoken languages, countries of interest, and whether or not the user is a parent of small children, can be easily predicted from the list or even the categories of the installed apps on smartphones.
In this paper, we focus on the second privacy threat and measure the unicity of installed apps to be able to measure the risk of re-identification if app datasets are released in public or shared between two entities.

It is quite in the news these days
\footnote{\url{http://goo.gl/00FbD0}}
that Twitter has started to collect the list of apps that a user has installed.
They claim to use this information for targeted interest-based advertising among others.
However, it might be a privacy concern if Twitter shares this list of installed apps with an advertising company, even in pseudo-anonymized form, i.e., after removing all direct user identifiers (and even if app names are replaced with their hashes).
This is because the advertising company might implicitly know a subset, say $K$, of installed apps of a user in which their ad library is present.
So if $K$ apps are enough to uniquely identify a user in the dataset, the advertiser would be able to re-identify the user in the Twitter dataset, and hence, learn about all the other installed apps of that user.
By knowing this whole list of installed apps of a user, the advertiser can learn about that user's interests and habits (as demonstrated in \cite{Seneviratne:2014}), and consequently, might be able to deliver the targeted ads directly in these apps in which its library is present. 
We believe that this is a real privacy threat to smartphone users (both Android and iOS) today as apps running on these OSs can access the list of installed/running apps.
It is to be noted that Android apps do not require any permission to access the whole list of installed apps.
On iOS, Apple does not provide a public API to access the list of installed apps but apps can get the list of currently running apps at any time.
And if an app makes a frequent scan of currently running apps over a period of time, the list can converge very fast to the list of installed apps.

\textbf{Contributions:} The main contributions are as follows.
\begin{itemize}
\item We show that 99\% of the lists of installed applications of users are unique out of a total of 55 thousands users. 
Moreover, as few as two applications are sufficient for an adversary to identify an individual's application list with a probability of 0.75 in our dataset. 
The re-identification probability increases to almost 0.95 if the adversary knows 4 apps.
We stress that these results were obtained without considering any system apps (which are common for all users), and apps were identified only by the hash of their names.
Incorporating additional information into their identifiers, such as app version, time of installation, etc., would increase these probabilities even more. 

\item We propose an unbiased estimate of the real uniqueness of any subset of applications, i.e., the probability that a randomly selected subset of apps with cardinality $K$ is unique in the dataset. 
For this purpose, we use a Markov Chain Monte Carlo method to sample subsets of applications from a dataset uniformly at random. 
We prove that this chain is generally fast-mixing with most practical datasets, i.e., has a running time complexity which is roughly linear in the dataset size and $K$. 
This result might be of independent interest, as this technique can be used to sample subsets with arbitrary cardinality from any set-valued dataset. 

\item We attempt to predict the uniqueness of lists of applications in larger datasets using standard non-linear regression analysis. 
Our learned model performs well on our limited app dataset as well as on mobility data with sufficiently large number of users.
However, in case of mobility data, we find that the model is not able to predict well if it is trained with smaller datasets, e.g., of the size of our app dataset.
Therefore, we conclude that our app dataset at hand might probably be too small to accurately predict the uniqueness in a larger datasets such as all Android users worldwide. 
\end{itemize}

\section{Unicity as a measure of re-identifiability}

Let $\mathbb{A}$ denote the universe of all apps, where each application is represented by a unique identifier in $\mathbb{A}$. A dataset $D\subseteq 2^\mathbb{A} \setminus \{\}$ is the ensemble of all apps of some set of individuals, where $|D|$ denotes the number of individuals in $D$. A record $D_u$, which is a non-empty subset of $\mathbb{A}$, refers to all apps of an individual $u$ in $D$. 
A set of applications with cardinality $K$ is shortly called $K$-apps henceforth. The set of all $K$-apps over $\mathbb{A}$ is denoted as $\mathbb{A}^K$.

\begin{definition}[Unicity]
\label{def:unicity}
Let $\mathit{supp}(x,D)$ denote the support of $x \in \mathbb{A}^K$ in dataset $D$, i.e., the number of records in $D$ which contain $x$. Then, 
$$
H_1 = \frac{|\{x : x \in \mathbb{A}^K \wedge \mathit{supp}(x,D)=1\}|}{|\{x : x \in \mathbb{A}^K \wedge \mathit{supp}(x,D) \geq 1\}|}
$$
is defined as the unicity (or uniqueness) of $K$-apps in $D$.
\end{definition}

The unicity of $K$-apps is the relative frequency of $K$-apps which are contained by only a single record. In general, \emph{relative abundance distribution} (RAD)\footnote{This term is often used in the field of ecology 
to describe the relationship between the number of observed species as a function of their observed abundance.} is a relative frequency histogram $\mathbf{H} = (H_1, H_2, \ldots, H_n)$ of $K$-apps with respect to a dataset $D$, where $H_i$ denotes the relative frequency of $K$-apps which are contained by exactly $i$ records in $D$, i.e., 
$
H_i = \frac{|\{x : x \in \mathbb{A}^K \wedge \mathit{supp}(x,D)=i\}|}{|\{x : x \in \mathbb{A}^K \wedge \mathit{supp}(x,D) \geq 1\}|}
$.

Unicity is strongly related to re-identifiability, and we use it as a measure of privacy in this paper:
it is the probability that an adversary, who only knows $K$ applications installed on a user's device, can single out the record of this user in $D$. 
Indeed,  any $K$-apps which is unique in $D$ can be used as a personal identifiable information (PII) of its record owner. Specifically, if the adversary knows such $K$-apps, it can easily identify the corresponding record and retrieve all the applications installed by its owner, even if $D$ is pseudo-anonymized (i.e., does not contain any direct PII such as device ID or personal name).  
Therefore, large unicity usually indicates a serious privacy risk in practice. 



\section{Approximating unicity with sampling}
\label{sec:sampling}

To compute unicity, and RAD in general, the support of all different $K$-apps in $D$ should be calculated. 
However, this is usually prohibitively expensive in practice.
Therefore, like previous works \cite{Science15, Nature13}, we rely on sampling to estimate unicity. 
In particular, let $\Omega^K$ denote the set of all $K$-apps which occur in at least one individual's record, i.e., $\Omega^K = \{x : x \in \mathbb{A}^K \wedge \mathit{supp}(x,D) \geq  1\}$.
We randomly sample a set $V$ of $K$-apps from $\Omega^K$, and approximate the real unicity $H_1$ by the sample unicity $\hat{H}_1 = 
 \frac{|\{x : x \in V \wedge \mathit{supp}(x,D)=1|}{|V|}$, where $V \subseteq \Omega^K$ is the sample set, and $n=|V|$ is the sample size.

\subsection{Biased vs. unbiased estimation of unicity}
\label{sec:biased}
How should we sample $K$-apps from the dataset? A popular technique, which has been used in several works \cite{Science15, Nature13},  first samples a user uniformly at random in $D$, and then a set of $K$ applications from this user's record also uniformly at random. However, this simple technique provides a \emph{biased estimation} of the unicity in Definition \ref{def:unicity}, if the estimator remains the sample mean $\hat{H}_1$, since $E[\hat{H}_1] \neq H_1$.  
In fact, $K$-apps which occur in more records of $D$ become more likely to be selected by this approach (assuming records have similar sizes). 
As a result, this sampling method is biased towards more popular $K$-apps, and the measured unicity is an underestimation of the real unicity $H_1$ what one would get with an unbiased estimator of $H_1$. Such an unbiased estimator can be the sample unicity $\hat{H}_1$ of $K$-apps which are sampled truly uniformly at random from $D$. 
This is also illustrated in Figure~\ref{fig:unicity_with_sampling_type}, where the sample unicity of biased and unbiased (i.e., uniform) samples are reported.

Before describing our unbiased estimation of unicity $H_1$, we shed some light on the privacy semantics behind the two sampling approaches.  
The biased technique approximates the success probability of an adversary who is more likely to know popular $K$-apps from the application set of any user. 
For instance, continuing the the case of the advertiser from Section \ref{sec:intro}, the advertiser's library should be more likely to be used by popular apps (such as Facebook, Twitter, etc.), 
which are installed on many devices, rather than by other less popular apps. 
However, this is not necessarily true and in general, an advertiser's library can be included in any $K$-apps of a user.
In the rest of the paper, we assume that the adversary can learn \emph{any} $K$-apps of \emph{any} users in $D$ with equal probability, which is the most general assumption in practice. 
Therefore, we  are interested in an unbiased estimator of $H_1$. 

\subsection{Uniform sampling of $K$-apps}
\label{sec:uniform_sampling}

 A unbiased estimation of $H_1$ is obtained, if $\hat{H}_1$ is computed over a sample set where each $K$-apps can appear with equal probability.
Hence, our task is to sample an element from $\Omega^K$ uniformly at random for any $K$. A first (naive) approach could be to use rejection sampling, i.e., sample a candidate $K$-apps from $\mathbb{A}^K$ uniformly at random, and then accept this candidate as a valid sample only if it also occurs in $D$. Otherwise, repeat the process until a candidate is accepted.  Although sampling a candidate from $\mathbb{A}^K$ is straightforward, it is very likely to be non-existent in $D$ (especially if $K$ is large), and hence, its running complexity is $O(|\mathbb{A}|^K)$ in the worst case. An alternative approach could be to enumerate $\Omega^K$, and choosing one element directly from $\Omega^K$ uniformly at random. However, the complexity of this approach is still $O(|D|(\max_u|D_u|)^K/K!)$. Unfortunately, these naive methods provide acceptable performance only if $K$ is small. As Table \ref{tab:D} shows, in our dataset, $|\mathbb{A}| = 92210$, $\max_u|D_u| = 541$, $|D| = 54893$, and we wish to estimate the unicity when $1\leq K \leq 10$. 

We instead propose a  sampling technique based on the Metropolis-Hastings algorithm \cite{MRRTT53jcp, Chib95}, which is a Markov Chain Monte Carlo (MCMC) method. Our proposal has a worst-case complexity of only $O(K|D|/H_1^*)$, where $H_1^*$ is roughly the unicity of $K$-apps in $D$.  As the unicity of $K$-apps is large, especially if $K$ is large, the complexity is approximately $O(K|D|)$ in practice. Hence, our sampling technique remains reasonably fast even for larger values of $K$.

In particular, we construct an ergodic Markov chain, denoted by $\mathcal{M}$,
such that its stationary distribution $\pi$ is exactly the distribution that we want to sample from, that is, the uniform distribution over $\Omega^K$.  Each $K$-apps in $\Omega^K$ corresponds to a state of $\mc{M}$, and we simulate $\mc{M}$ until it gets close to $\pi$, at which point the current state of $\mc{M}$ can be considered as a sample from $\pi$. 
$\mathcal{M}$ is detailed in Algorithm \ref{alg:generic_chain}. At each state transition, $\mc{M}$ picks a candidate next state $C$ independently of the current state $S$ (in Line 6-7). In Line 8, the candidate is either accepted  (and $\mc{M}$ moves to $C$) or rejected with certain probability (in which case the candidate state is discarded, and $\mc{M}$ stays at  $S$). 
The main idea is that, at each state, we use the fast but biased sampling technique, which is described in Section \ref{sec:biased}, to propose a candidate $C$ (in Line 6-7). We correct this bias by adjusting the acceptance/rejection probability (in Line 8) accordingly; $\mathcal{M}$ is more likely to accept such $K$-apps which are less likely to be proposed in Line 6-7. Indeed, as  $\pi(S) = \pi(C)$, the probability of acceptance is
$\min\left(1,  \frac{ Pr[\text{``$S$ is proposed''}]}{Pr[\text{``$C$ is proposed''}]}\right) = \min\left(1, \frac{\sum_{\forall u : U_u \supseteq S}  1 / {|U_u| \choose K}}{ \sum_{\forall u : U_u \supseteq C}  1 / {|U_u| \choose K}}\right) = \min \left(1, q(S)/q(C)\right)$. A more formal analysis is described in Appendix \ref{sec:app}. The proofs of all the theorems in this paper can be found in Appendix \ref{sec:app}.

\begin{algorithm}[h!]
\begin{algorithmic}[1]
\small
\STATE {\bf Input:} Dataset $D$, $K$, \# of iterations $t$ \\
\STATE {\bf Output:} A sample $S \in \Omega^K$\\
\STATE Let $U : = \{D_u : |D_u| \geq K \wedge D_u \in D\}$
\STATE Let $S$ be an arbitrary $K$-apps in $\Omega^K$ 
\FOR  {$k=1$ \textbf{to} $t$}
\STATE Select an individual $u \in [1,|U|]$ uniformly at random 
\STATE Select a subset $C \subseteq U_u$ uniformly at random such that $|C| = K$
\STATE Let $S := C$ with probability $\min\left(1, q(S)/q(C)\right)$, where $q(x) = \sum_{\forall u : D_u \supseteq x} \prod_{i = 1}^{K} \frac{1}{|U_u| - K + i} $
\ENDFOR
\STATE \textbf{return} $S$ 
\end{algorithmic}
\caption{MCMC sampling ($\mc{M}$)} \label{alg:generic_chain}
\end{algorithm}  

\begin{theorem}
\label{THM:MARKOV}
$\mathcal{M}$ in Algorithm \ref{alg:generic_chain} is an ergodic Markov chain whose unique stationary distribution is the  uniform distribution over $\Omega^K$ for any $K$.
\end{theorem}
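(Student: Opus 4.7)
The plan is to invoke the standard Metropolis--Hastings framework: first identify the proposal distribution of Lines 6--7, then show detailed balance with the uniform $\pi$ on $\Omega^K$ given the acceptance rule of Line 8, and finally verify irreducibility and aperiodicity so that ergodicity and uniqueness of the stationary distribution follow.

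First, I would compute the proposal probability of any candidate $C\in\Omega^K$. Lines 6--7 pick $u\in U$ uniformly and then a $K$-subset of $U_u$ uniformly, so
\[
g(C)=\frac{1}{|U|}\sum_{u:\,U_u\supseteq C}\binom{|U_u|}{K}^{-1}.
\]
Using the identity $\binom{|U_u|}{K}^{-1}=K!\prod_{i=1}^{K}\frac{1}{|U_u|-K+i}$, this becomes $g(C)=\frac{K!}{|U|}\,q(C)$, i.e.\ $g$ is proportional to $q$. In particular $g$ is supported exactly on $\Omega^K$, because $C\in\Omega^K$ iff $U_u\supseteq C$ for some $u\in U$.

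Second, I would verify detailed balance. With $\pi(S)=\pi(C)=1/|\Omega^K|$ and acceptance probability $\alpha(S,C)=\min(1,q(S)/q(C))=\min(1,g(S)/g(C))$ (the second equality because the constants cancel), the off-diagonal transition is $P(S,C)=g(C)\alpha(S,C)$ for $S\ne C$, and the elementary identity $g(C)\min(1,g(S)/g(C))=g(S)\min(1,g(C)/g(S))$ yields $\pi(S)P(S,C)=\pi(C)P(C,S)$. Thus the uniform distribution on $\Omega^K$ is stationary.

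Third, I would establish irreducibility and aperiodicity. For irreducibility, fix any $S,C\in\Omega^K$: $g(C)>0$ since some $U_u\supseteq C$ yields the pair $(u,C)$ with positive probability, and $\alpha(S,C)>0$ because $q(S),q(C)>0$, so $P(S,C)>0$ and every pair of states communicates in a single step. For aperiodicity, the same argument gives $g(S)>0$, and either $S$ is re-proposed and accepted, or some other $C$ is proposed and rejected with positive probability; in both cases $P(S,S)>0$. Hence $\mathcal M$ is an ergodic Markov chain on $\Omega^K$ whose unique stationary distribution is the uniform distribution over $\Omega^K$. The only mildly subtle step is identifying $g\propto q$ from the compact product form given in the algorithm; once that identity is in hand, the remainder is routine Metropolis--Hastings bookkeeping.
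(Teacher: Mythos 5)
Your proposal is correct and follows essentially the same route as the paper: you identify the Lines 6--7 proposal as a state-independent (independence) proposal $g \propto q$, observe that the Line 8 acceptance probability is exactly the Metropolis--Hastings ratio for the uniform target, and conclude reversibility/stationarity plus irreducibility and aperiodicity. The only difference is presentational --- you verify detailed balance via the elementary $\min$ identity and give an explicit self-loop argument for aperiodicity, whereas the paper cites the standard Metropolis--Hastings framework for these facts --- so no gap remains.
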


\begin{figure*}
        \centering
        ~ 
        \begin{subfigure}[b]{0.32\textwidth}
                \includegraphics[width=\textwidth]{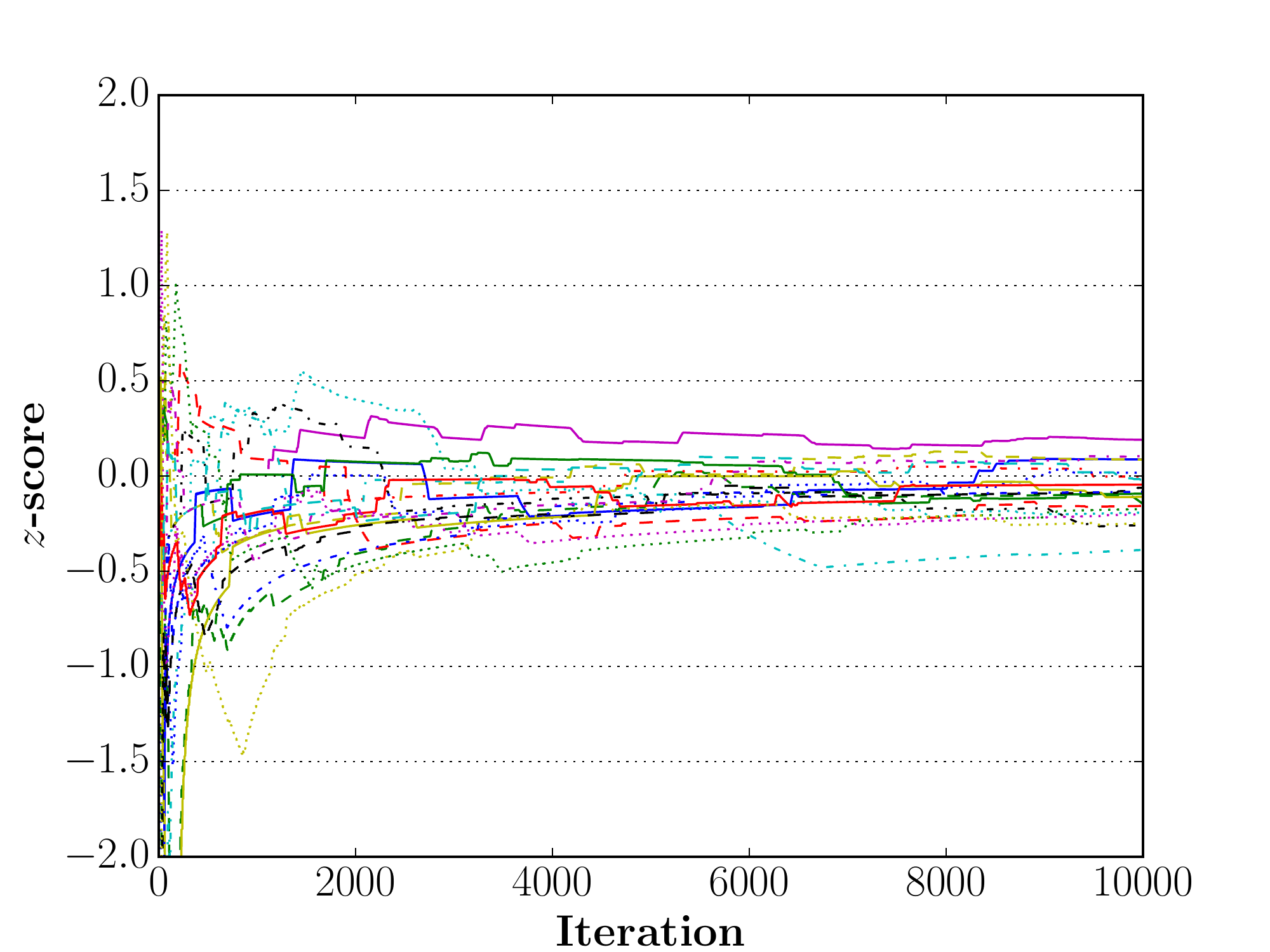}
                \caption{$K=5$}
        \end{subfigure}
        ~ 
        \begin{subfigure}[b]{0.32\textwidth}
                \includegraphics[width=\textwidth]{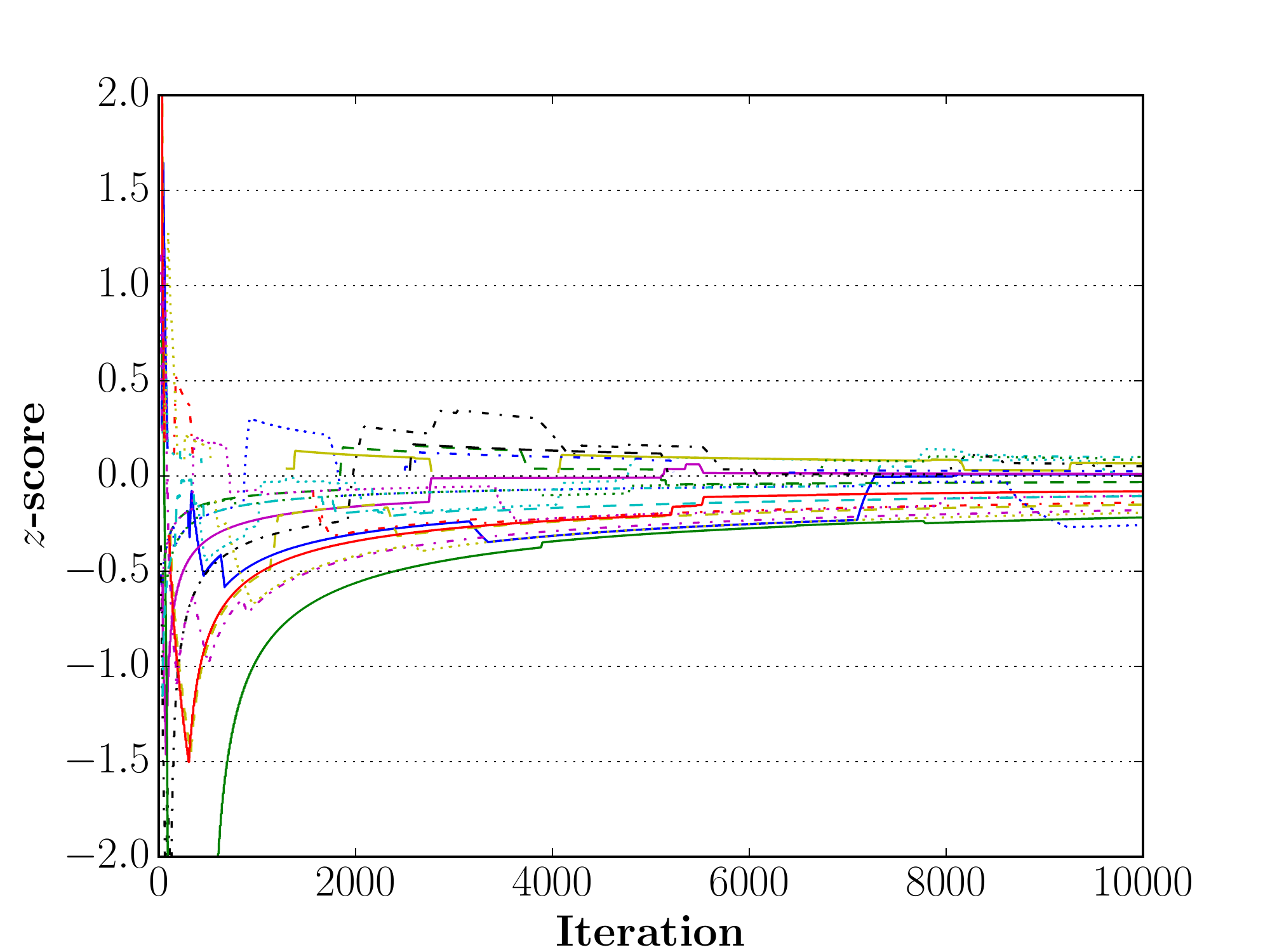}
                \caption{$K=6$}
        \end{subfigure}
        \begin{subfigure}[b]{0.32\textwidth}
                \includegraphics[width=\textwidth]{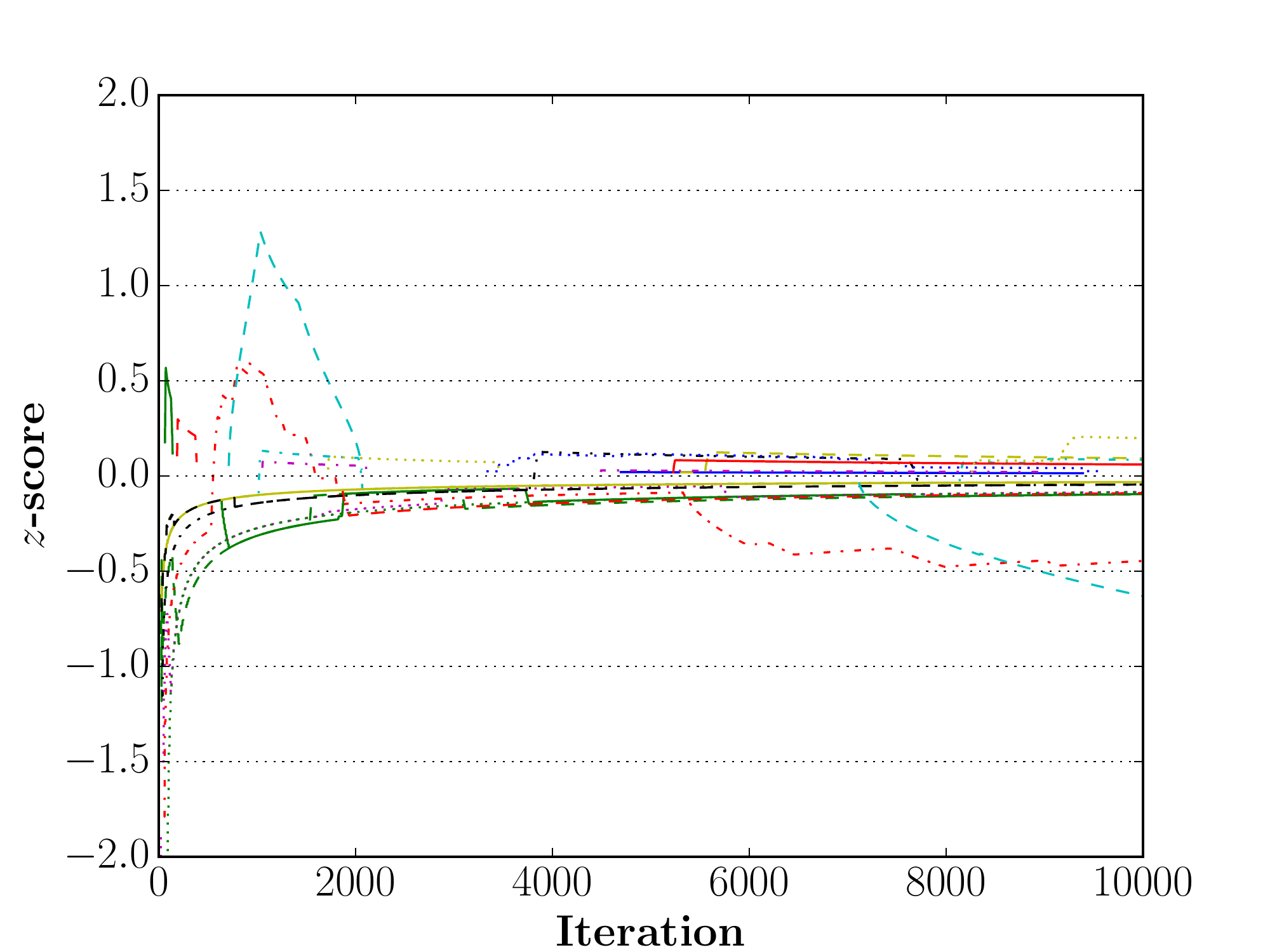}
                \caption{$K=7$}
        \end{subfigure}       
 \caption{\label{fig:conv}Convergence of our Markov chain $\mathcal{M}$. The $z$-score, depending on the number of iterations $t$,  of 20 independent chains   are plotted.}
\end{figure*}

\paragraph{Convergence of $\mathcal{M}$}
How to adjust $t$ in Algorithm \ref{alg:generic_chain}?
We prove that a ``good" uniform sample from $\Omega^K$ can be obtained roughly after $O(|D|)$ iterations in most practical cases.
The time that $\mc{M}$ takes to converge to its stationary distribution $\pi$ is known as the \emph{mixing time} of $\mc{M}$, and is measured in terms of the total variation distance between the distribution at time $t$ and $\pi$. 

\begin{definition}[Mixing time]

For $\xi > 0$, the mixing time $\tau_{\mc{M}}(\xi)$ of Markov chain $\mc{M}$ is
$$
\tau_{\mc{M}}(\xi) = \min\{ t' : ||P^t_{\mc{M}} - \pi||_{\mathit{tv}} \leq \xi, \forall t \geq t'\}
$$
where 
$
||P^t_{\mc{M}} - \pi||_{\mathit{tv}} =\max_{x\in \Omega^K}\frac{1}{2}\sum_{y\in \Omega^K}| P^t_{\mc{M}}(x,y) - \pi(y)|
$ defines the total variation distance. $P^t_{\mc{M}}(x,y)$ denote the $t$-step probability of going from state $x$ to $y$, and
$P^t_{\mc{M}}$ denote the $t$-step probability distribution over all states.
\end{definition}

The next theorem shows that $\mathcal{M}$'s mixing time is $O(|D|\log(1/\xi)/H_1^*)$, where $|D|$ is the dataset size and $H_1^*$ is the unicity of $K$-apps from the largest record of $D$. 
As the unicity of $K$-apps is usually large in practice, especially if $K$ is large, $\mathcal{M}$ is fast-mixing in general. In our dataset $D$, $0.6 \leq H_1^* \leq 0.999$ for $2\leq K \leq 9$ \footnote{The unicity of $K$-apps from a single record can easily be approximated with Inequality \ref{eq:sampling_bound} using uniform samples over all $K$-apps from the record. Likewise the biased sampling in Section \ref{sec:biased}, this sampling is easy to implement (e.g., by choosing $K$ items individually from the record without replacement).}.

\begin{theorem}[Mixing time of $\mathcal{M}$] 
\label{THM:MIXING}
Let $H_1^*$ denote the probability that a randomly selected set of $K$ items from the largest record  (i.e., having the most apps) in $D$ is unique. Then,
$\tau_{\mathcal{M}}(\xi) \leq |D|\ln(1/\xi)/H_1^*$ for any $K$. 
\end{theorem}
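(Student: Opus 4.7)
The plan is to recognize $\mathcal{M}$ as an \emph{independence Metropolis--Hastings sampler} and then apply the classical uniform-ergodicity bound of Mengersen and Tweedie, which gives the required geometric contraction in total variation. Since Line 6--7 of Algorithm~\ref{alg:generic_chain} draws the candidate $C$ from a distribution $g$ that does \emph{not} depend on the current state $S$, and since $\pi$ is uniform on $\Omega^K$, the Metropolis acceptance ratio simplifies to $\min(1,g(S)/g(C))$. A short calculation shows that $g(x)=\frac{K!}{|U|}\,q(x)$, so this is exactly the ratio used in Line~8; thus $\mathcal{M}$ is an independence sampler with proposal $g$ and target $\pi$.

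Next I would invoke the standard fact (Mengersen--Tweedie, 1996) that for an independence sampler with $M=\sup_{x\in\Omega^K}\pi(x)/g(x)<\infty$,
\begin{equation*}
\|P^{t}_{\mathcal{M}}(x,\cdot)-\pi\|_{\mathit{tv}}\;\le\;\bigl(1-1/M\bigr)^{t}\;\le\;e^{-t/M},
\end{equation*}
so that $\tau_{\mathcal{M}}(\xi)\le M\ln(1/\xi)$. Everything then reduces to showing $M\le |D|/H_1^{*}$.

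For the upper bound on $M$, let $M_{\max}=\max_{u}|U_u|$ and fix any $x\in\Omega^K$. The proposal probability is
\begin{equation*}
g(x)=\frac{1}{|U|}\sum_{u:U_u\supseteq x}\frac{1}{\binom{|U_u|}{K}}\;\ge\;\frac{1}{|U|\binom{M_{\max}}{K}},
\end{equation*}
since $x$ is contained in at least one record and no record exceeds $M_{\max}$ items. For the denominator of $M$, observe that every unique $K$-apps drawn from the largest record belongs to $\Omega^K$, and by the definition of $H_1^{*}$ the number of such unique sets is $H_1^{*}\binom{M_{\max}}{K}$; hence $|\Omega^K|\ge H_1^{*}\binom{M_{\max}}{K}$. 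Combining,
\begin{equation*}
M\;=\;\frac{1/|\Omega^K|}{\min_x g(x)}\;\le\;\frac{|U|\binom{M_{\max}}{K}}{H_1^{*}\binom{M_{\max}}{K}}\;=\;\frac{|U|}{H_1^{*}}\;\le\;\frac{|D|}{H_1^{*}},
\end{equation*}
which plugged into the Mengersen--Tweedie bound yields $\tau_{\mathcal{M}}(\xi)\le |D|\ln(1/\xi)/H_1^{*}$.

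The main obstacle is the bound $|\Omega^K|\ge H_1^{*}\binom{M_{\max}}{K}$: it is tempting but wrong to estimate $|\Omega^K|$ from the whole dataset, and the trick is to restrict attention to the largest record, where the worst-case proposal probability is realised, so that the factor $\binom{M_{\max}}{K}$ cancels cleanly. Verifying the Mengersen--Tweedie inequality itself (via a one-step minorisation $P_{\mathcal{M}}(x,\cdot)\ge (1/M)\pi(\cdot)$ followed by a coupling argument) is standard and I would either cite it or include a short appendix derivation, but no new estimates are needed beyond the elementary bounds above.
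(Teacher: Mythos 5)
Your proposal is correct, and it takes a genuinely different route from the paper. The paper proves Theorem~\ref{THM:MIXING} by constructing an explicit coupling: the two copies of $\mathcal{M}$ are forced to propose the same individual and the same candidate $K$-set, and they provably coalesce whenever the proposed set is a unique $K$-subset of the largest record (such a set has minimal proposal probability, so both chains accept it with probability $1$); this event has probability at least $H_1^*/|U|$ per step, and the coupling lemma converts the geometric coalescence time into the stated bound. You instead observe that $\mathcal{M}$ is an independence Metropolis--Hastings sampler (your identity $g(x)=\frac{K!}{|U|}q(x)$ is exactly right, so the acceptance ratio in Line~8 is the correct MH ratio for uniform $\pi$), and you apply the Mengersen--Tweedie uniform-ergodicity bound, i.e.\ the one-step Doeblin minorisation $P_{\mathcal{M}}(x,\cdot)\ge (1/M)\pi(\cdot)$ with $M=\sup_x \pi(x)/g(x)$; your bounds $\min_x g(x)\ge \bigl(|U|\binom{M_{\max}}{K}\bigr)^{-1}$ and $|\Omega^K|\ge H_1^*\binom{M_{\max}}{K}$ are both valid and give $M\le |U|/H_1^*\le |D|/H_1^*$, hence the claimed mixing time. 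The comparison is instructive: the paper's coupling is fully self-contained and its reliance on $H_1^*$ is intrinsic (coalescence there requires the proposed set to be unique in $D$), whereas your minorisation allows coalescence on any candidate, and in fact your argument proves something slightly stronger than the theorem: since \emph{every} $K$-subset of the largest record lies in $\Omega^K$, you have $|\Omega^K|\ge\binom{M_{\max}}{K}$ outright, so $M\le|U|\le|D|$ and $\tau_{\mathcal{M}}(\xi)\le|D|\ln(1/\xi)$ with no $1/H_1^*$ factor; restricting to unique subsets is only needed to land on the theorem's exact form. The only presentational caveat is that you should either cite the Mengersen--Tweedie result precisely or include the short minorisation-plus-coupling derivation (as you indicate), since the paper's framework only states the Metropolis--Hastings correctness, not uniform ergodicity of independence samplers.
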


We emphasize that the bound in Theorem \ref{THM:MIXING} is a worst-case bound, and the real convergence time can be much smaller depending on the dataset $D$ as well as the starting state of the chain. As we show next, $\mathcal{M}$ indeed exhibits much smaller convergence time than its theoretical worst-case bound for our dataset. 
We detected the convergence of $\mathcal{M}$ using the Geweke diagnostic \cite{Geweke92}; if $X_t$ denotes a Bernoulli random variable describing whether the current state of $\mathcal{M}$ at time $t$ is unique, and $\mathbf{X}_t = (X_1, X_2, \ldots, X_t)$, then we compute the $z$-score  
$z= \frac{E[\mathbf{X}_a] - E[\mathbf{X}_b]}{\sqrt{\mathit{Var}(\mathbf{X}_a) + \mathit{Var}(\mathbf{X}_b)}}$, where $\mathbf{X}_a$ is the prefix of $\mathbf{X}_t$ (first 10\%), and $\mathbf{X}_b$ is the suffix of $\mathbf{X}_t$ (last 50\%). 
 We declare convergence when the $z$-score falls within $[-1,1]$. Indeed, if $\mathbf{X}_a$ and $\mathbf{X}_b$ become identically distributed (i.e., $\mathbf{X}_a$ and $\mathbf{X}_b$ appear to be uncorrelated), the $z$ values become normally distributed with mean 0 and variance 1 according to the law of large numbers. 
We simulated 20 instances of $\mathcal{M}$ each starting at different states, and plotted the $z$-score of each chain depending on the number of iterations $t$ in Figure \ref{fig:conv}.
This shows that convergence is detected roughly after  3000 steps in all chains with different values of $K$.
When this happens, the current state can be taken as a valid sample. Hence, in the sequel, we run $\mathcal{M}$ with $t=3000$ to obtain a uniform sample from $\Omega^K$.

We note that $q$ in Algorithm \ref{alg:generic_chain} can be computed rapidly in practice by precomputing another dataset $T$, where each record corresponds to an application in $D$, and record $i$ contains the sorted list of all users who have application $i$ in their record. Hence, the set of users who have a common specific $K$-apps can be  computed easily  by taking the intersection of the corresponding records in $T$. The complexity of this operation is $O(K|i_{\max}|)$, where $|i_{\max}|$ is the maximum record size in $T$, i.e., the number of users of the most popular application in $D$. Fast implementations of the intersection of sorted integers are described in \cite{LemireBK14}. 


\subsection{Computing the sample size}

In order to compute the sample size, we use the Chernoff-Hoeffding inequality \cite{Hoeffding1963} on the tail distribution of the sum of independent (but not necessarily identically distributed) Bernoulli random variables. In particular, if $X_i$ denotes a Bernoulli random variable describing the event that the $i$th sampled $K$-apps is unique in $D$, then the deviation of the estimator $\hat{H}_1 = \sum_{i=1}^{n} X_i / n $ from $E[\hat{H}_1] = H_1$ is given by 
$
Pr\left[\left|\hat{H}_1  - H_1\right|\geq \varepsilon\right] \; 	 \leq \; 	2e^{-2n\varepsilon^2}
$
, or equivalently,
\begin{equation}
\label{eq:chernoff}
Pr\left[\left|\hat{H}_1  - H_1\right|< \varepsilon\right] \; 	 \geq \; 	1 - 2e^{-2n\varepsilon^2}
\end{equation}
where $\varepsilon$ is the sampling error and $\sigma = 1 - 2e^{-2n\varepsilon^2}$ is the confidence. Hence, we obtain that 
\begin{align}
\label{eq:sampling_bound}
n \geq \frac{1}{2\varepsilon^2} \ln\left(\frac{2}{1-\sigma}\right)
\end{align}
For example, if $\varepsilon = 0.01$ and $\sigma = 0.99$, we need to sample at least $26492$ $K$-apps from $D$ (with replacement) to guarantee that $|\hat{H}_1 -\hat{H}_1| < 0.01$ with probability at least $0.99$.

Considering RAD, suppose we aim at approximating the first $k$ relative frequency values of $\mathbf{H}$, i.e., $(H_1, H_2, \ldots, H_k)$. Therefore, we wish to simultaneously satisfy Inequality \ref{eq:chernoff} for each $H_i$ $(1\leq i \leq k)$, where  $\hat{H}_i = \sum_{j=1}^{n} X'_j/ n $, and $X'_j=1$ if the $j$th sampled $K$-apps occurs in exactly $i$ records of $D$, otherwise $X'_j=0$. Hence,
\begin{align}
Pr\left[ \bigwedge_{i=1}^k \left|\hat{H}_i  - H_i\right|< \varepsilon\right] 	 &\geq 1 -  \sum_{i=1}^k Pr\left[\left|\hat{H}_i  - H_i\right|\geq \varepsilon\right] \notag \\	
&\geq  1 - 2ke^{-2n\varepsilon^2} \notag
\end{align}
where $\delta =  1 - 2ke^{-2n\varepsilon^2}$ is the confidence. Therefore, 
\begin{align}
\label{eq:rad_sample_size}
n \geq \frac{1}{2\varepsilon^2} \ln\left(\frac{2k}{1-\sigma}\right)
\end{align}
For instance, for $\varepsilon = 0.01$, $\sigma = 0.99$, and $k=10$, we need to sample at least $38005$ $K$-apps from $D$ (with replacement). This will guarantee that $|\hat{H}_i -\hat{H}_i| < 0.01$ for all $1 \leq i \leq k$ with probability at least $0.99$.

%


\section{Evaluation}

\subsection{Dataset characteristics}
\label{sec:dataset}
The analyzed dataset comes from the Carat research project \cite{Oliner:2013}.
The dataset includes data from $54,893$ Carat Android users between March 11, 2013 and October 15, 2013  \cite{Truong:2014}.
During this period, the Carat app\footnote{\url{http://carat.cs.helsinki.fi}} was collecting the list of running apps (and not the list of all installed apps) on users' devices when the battery level changes.
\emph{As collecting the list of running apps multiple times over more than 7 months is likely to sum up to the set of all installed apps of a user, we consider a record as the set of installed applications in this paper, even if a record might not be the complete set of installed apps all the time.}



\begin{figure}[h]
	\centering
	\includegraphics[width=0.7\linewidth]{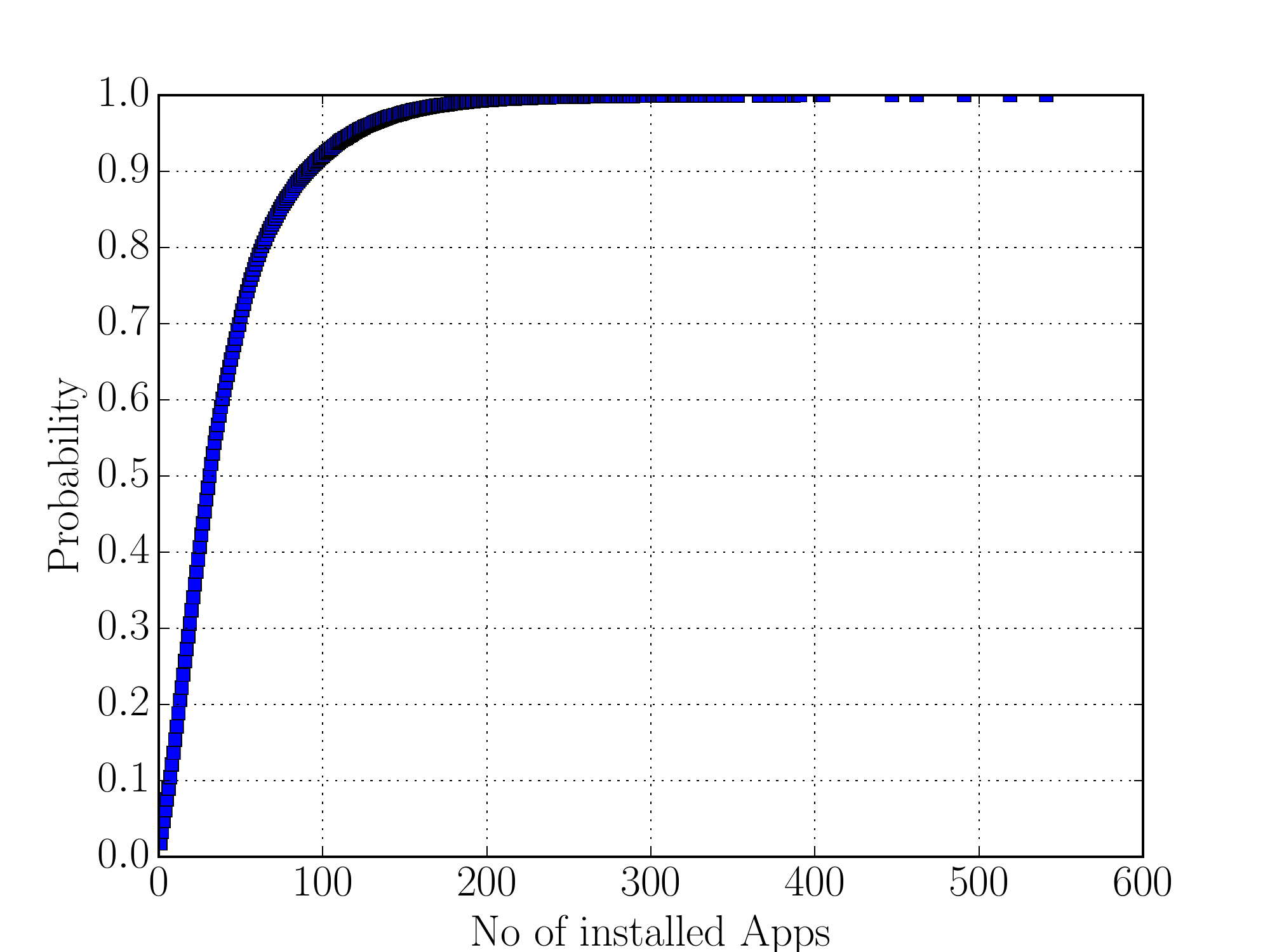}
	\caption{Cumulative distribution of installed apps}
	\label{fig:cdf}
\end{figure}

\begin{figure}[h]
	\centering
	\includegraphics[width=0.7\linewidth]{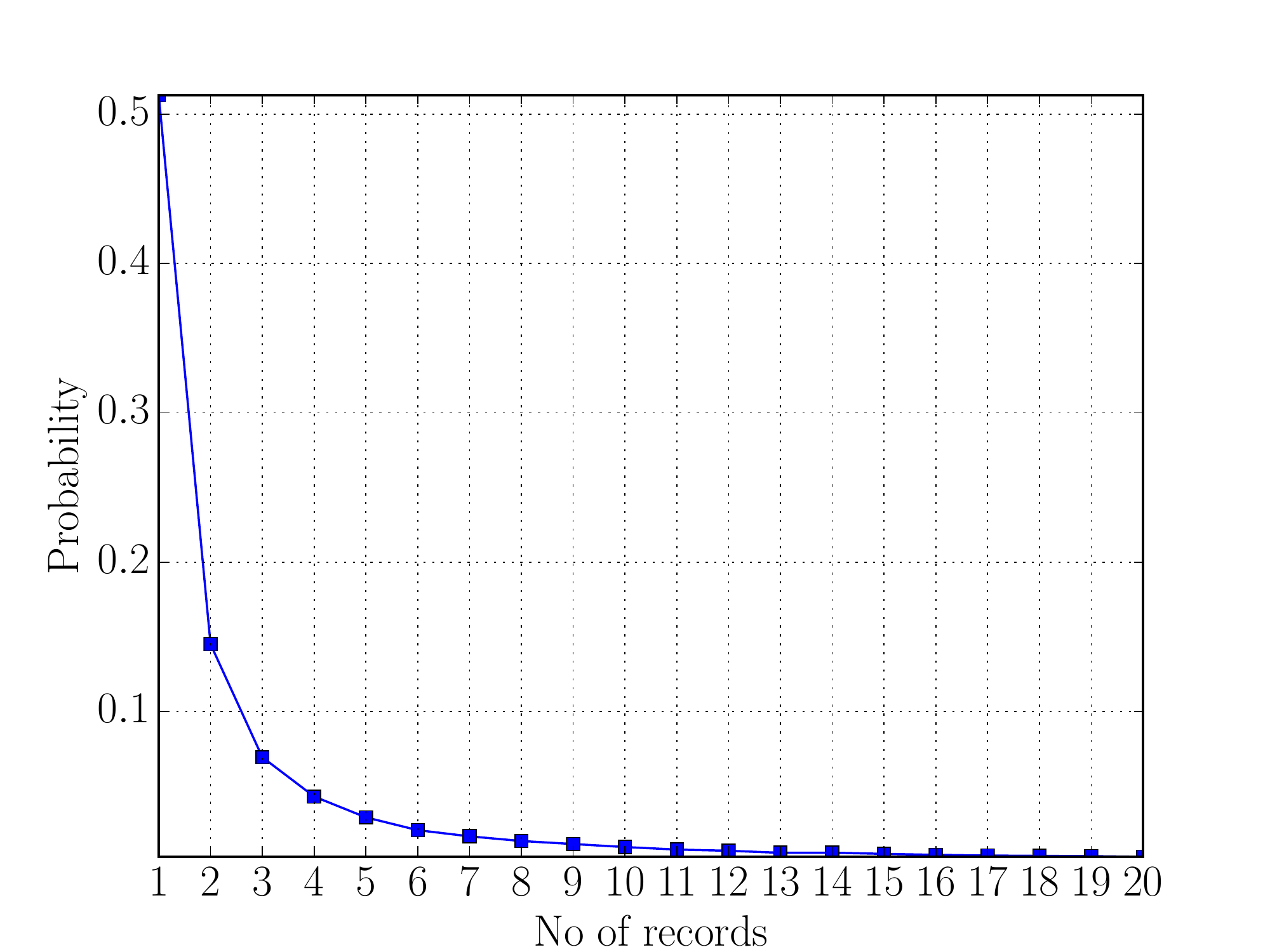}
	\caption{Probability distribution of number of records containing a particular app}
	\label{fig:pdf_users_cut}
\end{figure}

We removed system apps from all records because they are common to all users.
Without system apps, our analyzed dataset contains 92,210 different applications whereas total number of apps available on the GooglePlay were around 1 million during this time\footnote{\url{http://en.wikipedia.org/wiki/Google_Play}}.
Furthermore, the average number of apps installed per user in our dataset is 42 with a standard deviation of 39.
Table \ref{tab:D} summarizes the main characteristics of our dataset $D$.

Figure~\ref{fig:cdf} depicts the cumulative distribution of the number of apps installed by a particular user.
We note that more than 90\% of users have 100 or fewer applications. 
Probability distribution of the number of users who installed a particular app is depicted by Figure~\ref{fig:pdf_users_cut}.
Notice that more than half of the apps are contained by only a single record in $D$.

\paragraph{Ethical Considerations}
The data were collected with the users' consent, and they were explicitly informed that their data could be used and shared for various research projects.
In fact, the Carat privacy policy (available at \url{http://carat.cs.helsinki.fi}) clearly specifies that ``Carat is a research project, so we reserve the right to publish our results online and in academic publications. 
We also reserve the right to release the data sets into the public domain."
Also, the dataset was shared with us by the Carat team in a pseudo-anonymised form. 
In particular, identifiers were removed, and each application name was replaced with its SHA1 hash. 
It contained 54,893 records~\cite{Truong:2014}, i.e. one record per user. 
Each record is composed of the list of applications installed by the user.
Furthermore, the data sharing agreement that we signed, stipulated that we cannot use the dataset to deanonymize the users in the dataset.

\begin{table}
\centering
\begin{tabular}{|l|l|}
\hline
Dataset size $|D|$ & $54,893$ \\ \hline 
\# of all apps in $D$ & $92,210$ \\ \hline
Maximum record size $\max_u|D_u|$ & 541\\ \hline
Minimum record size $\min_u|D_u|$ & 1 \\ \hline
Average record size & 42 \\ \hline
Std.dev of record size & 39 \\ \hline
\end{tabular}
\caption{\label{tab:D} Characteristics of our dataset $D$}
\end{table}

\subsection{Results}

We find that 98.93\% of users have unique set of installed apps in $D$, i.e., there does not exist any other user with the same set of installed apps.
This means that if we know the list of all the installed apps of a user in the dataset, we can identify that user in the dataset with a probability of 0.99.
As the adversary might not always be aware of all the installed apps of a user in practice, we measure the unicity of $K$-apps for different values of $K$ using our dataset $D$. 


 \begin{figure}[h]
 	\includegraphics[width=\linewidth]{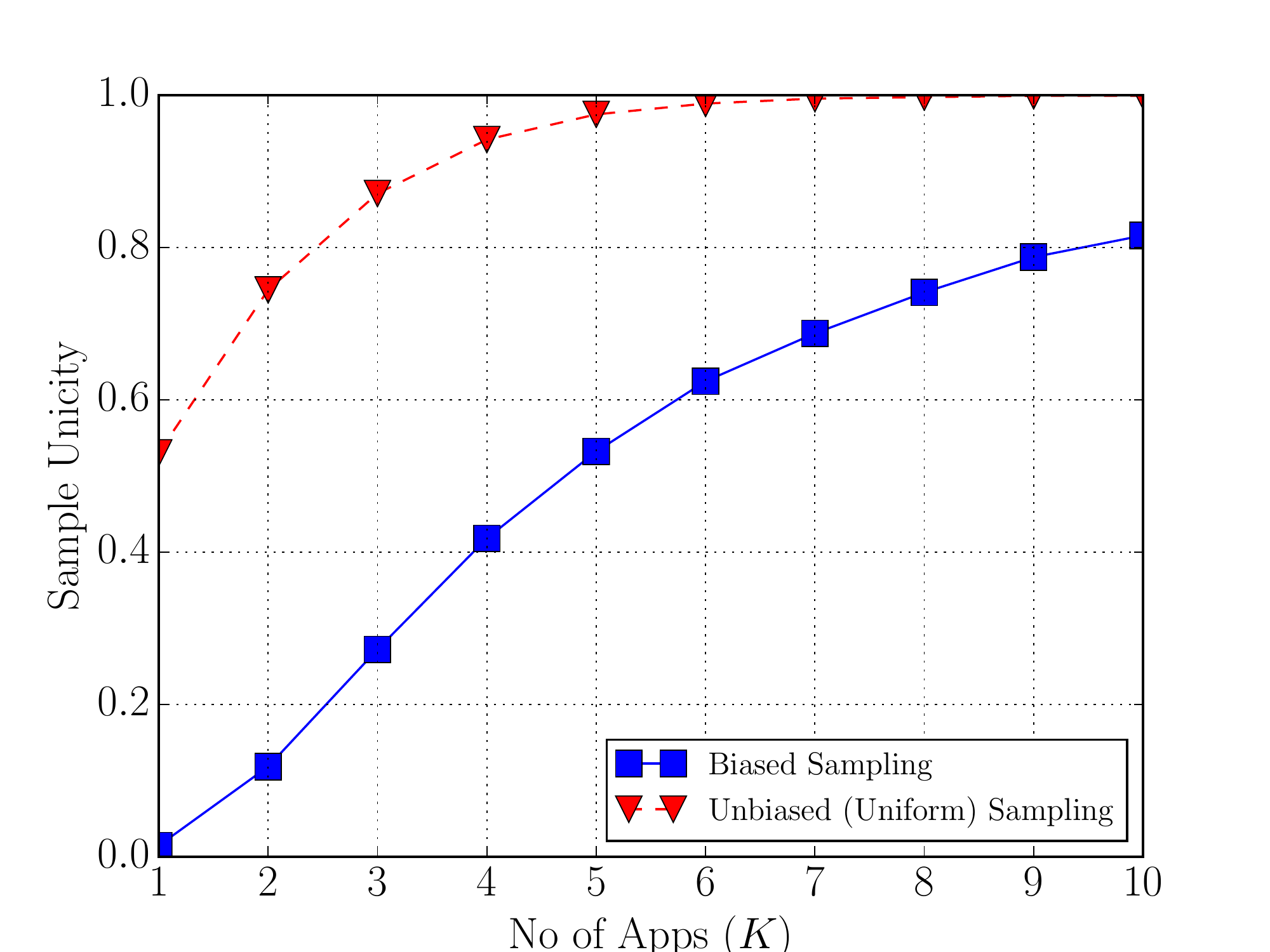}
 	\caption{Uniqueness probability as a function of $K$ for biased and unbiased sampling}
 	\label{fig:unicity_with_sampling_type}
 \end{figure}

Figure~\ref{fig:unicity_with_sampling_type} gives the unicity of $K$-apps with different values of $K$ (changing from 1 to 10) for the two different types of sampling techniques described in Section \ref{sec:sampling}: the biased sampling from \cite{Nature13, Science15} and our unbiased, uniform sampling described in Section \ref{sec:uniform_sampling}.
In each case, we computed the sample size using Inequality \ref{eq:chernoff} with maximum sampling error $\varepsilon = 0.01$ and confidence  $\sigma=0.99$.  Otherwise stated explicitly, we use this sample size in the sequel. This results in 26492 samples for each value of $K$. As biased sampling favours more popular $K$-apps, the sample unicity $\hat{H}_1$ is less than with our unbiased approach. In particular, the difference can be as large as 0.5 for smaller values of $K$, while it decreases as $K$ increases.
For the unbiased estimation, the sample unicity is 0.75 with $K=2$, and it reaches 0.99 when $K=6$.

Figure~\ref{fig:unicity_with_sampling_type} shows that the unicity of any $K$-apps is large and hence there would be a real privacy threat if such dataset was released.
Moreover, Figure \ref{fig:relative_abundance} depicts the relative abundance distribution in $D$, when $1 \leq K \leq 8$. RAD provides complementary information about users' privacy in $D$. In particular, even if the adversary cannot single out the record of the  target user in $D$, it might still learn new information about him/her. For example, if the known $K$-apps of the target user are shared by multiple users in $D$ and all these users have some identical apps besides the known $K$-apps, then the adversary learns that the target user also has these apps installed on his/her phone. This attack is often referred to as the homogeneity attack in the literature \cite{MachanavajjhalaKGV07}. We computed the required sample size using Inequality \ref{eq:rad_sample_size} with $\varepsilon = 0.01$ and $\sigma=0.99$ for $k=20$. This gives 41470 samples overall, which were taken with our uniform sampler $\mathcal{M}$.

\begin{figure}[h]
	\includegraphics[width=\linewidth]{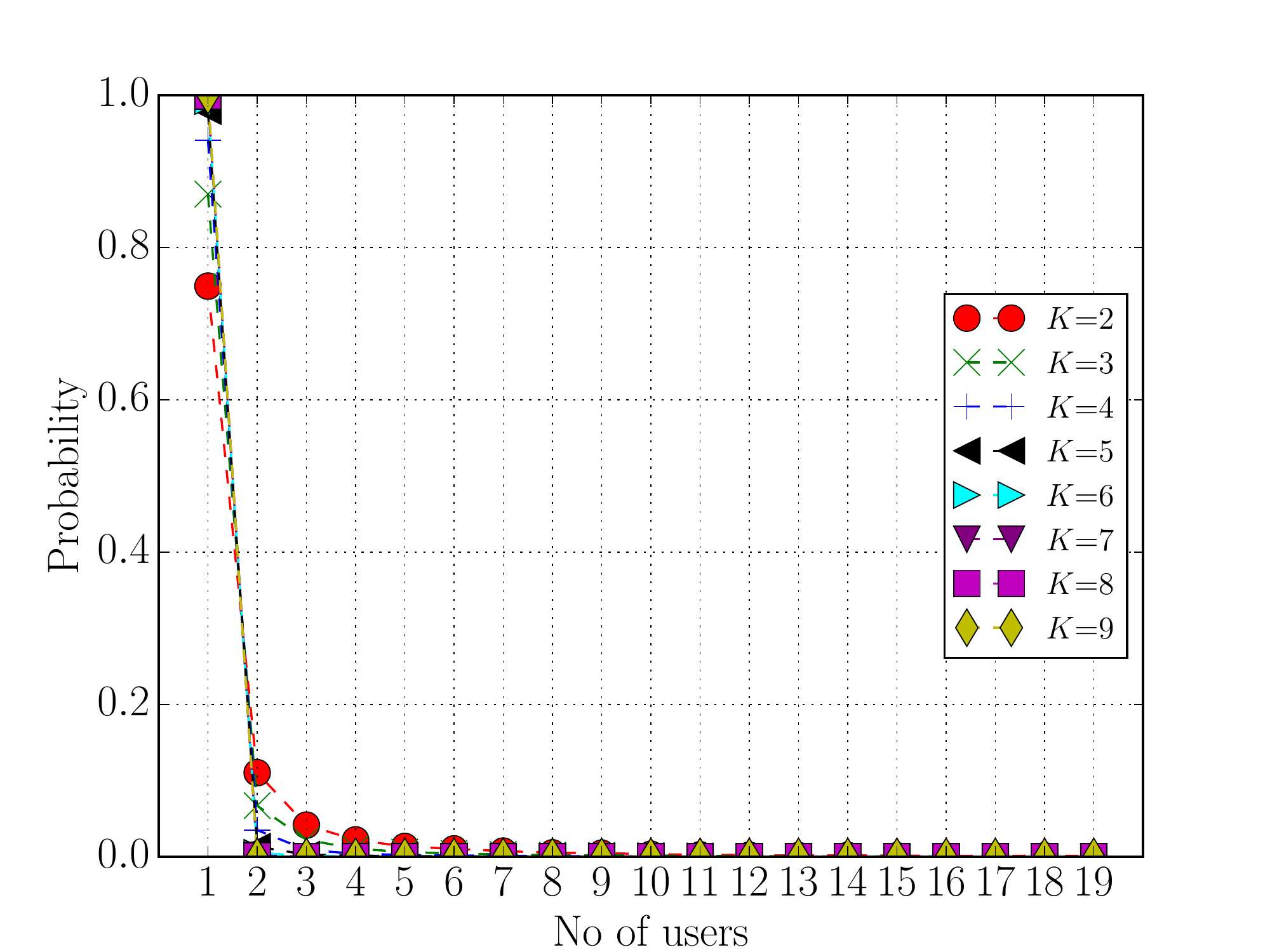}
	\caption{Relative abundance distribution of apps for different sizes of sets of apps}
	\label{fig:relative_abundance}
\end{figure}


To study the effect of number of users on unicity, we randomly select subsets of users of different sizes from $D$, and calculate the sample unicity within these subsets. 
Figure~\ref{fig:unicity_changes_with_users} depicts how unicity changes with the number of users in our dataset.
We find that unicity decreases if the user number increases.
However, this decrease becomes less significant for larger number of users.
This is probably due to the fact that the number of apps starts to saturate if the user number increases.
\begin{figure}[!t]
        \centering
	\includegraphics[width=\linewidth]{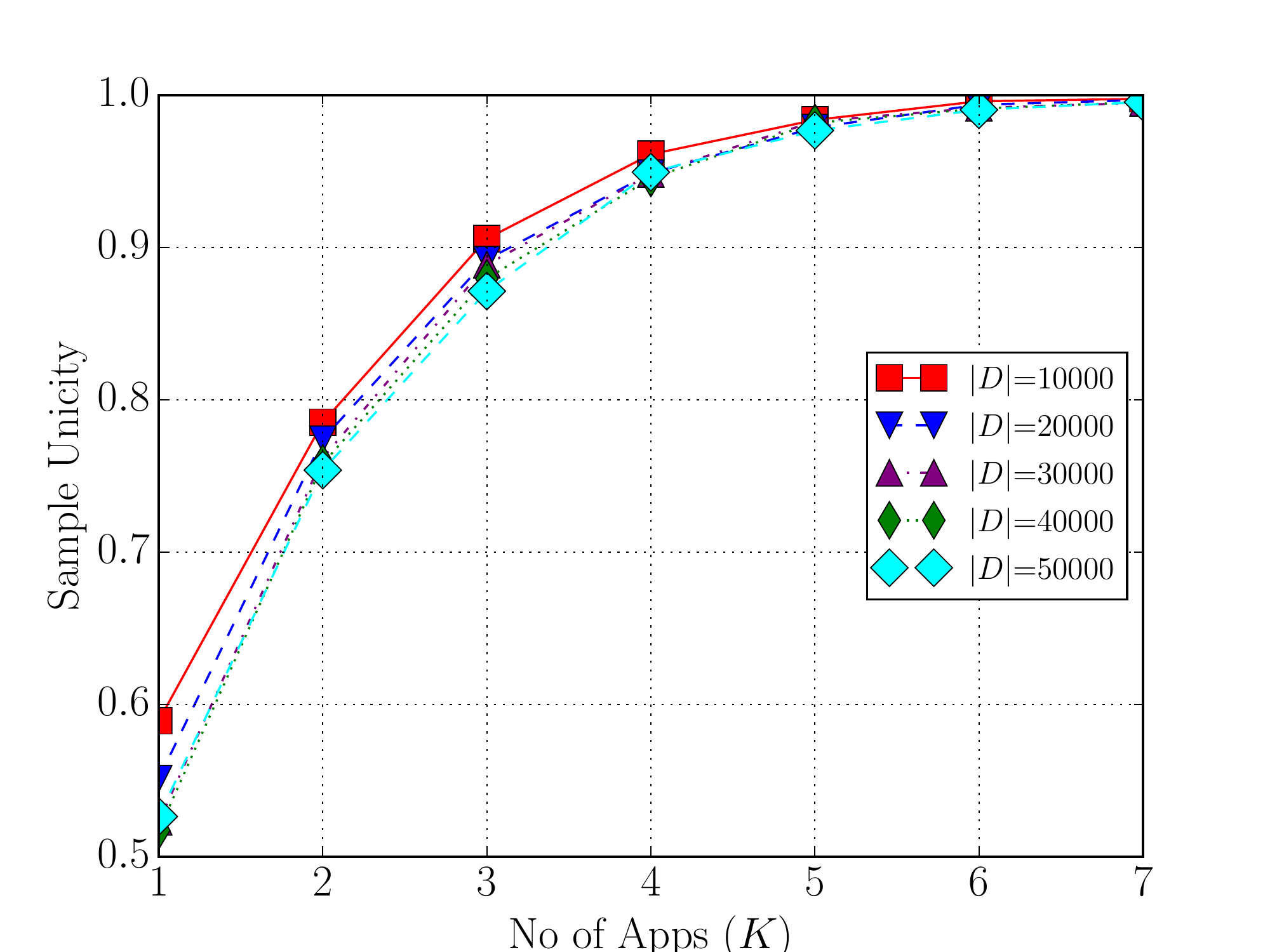}
	\caption{Effect of number of users}
	\label{fig:unicity_changes_with_users}
\end{figure}

As the size of our dataset is much less than the population size of all Android users worldwide (which was roughly 1 billion as of 2014 \footnote{\url{http://www.engadget.com/2014/06/25/google-io-2014-by-the-numbers/}} with 1.2 million different applications available on GooglePlay\footnote{\url{http://www.appbrain.com/stats/number-of-android-apps}}), we aim at predicting the unicity in a larger dataset (possibly in the whole population) in the next section.

\section{Unicity Generalization for larger datasets}

\begin{figure*}[!t]
        \centering
        \begin{subfigure}[]{0.3\textwidth}
                \includegraphics[width=\textwidth]{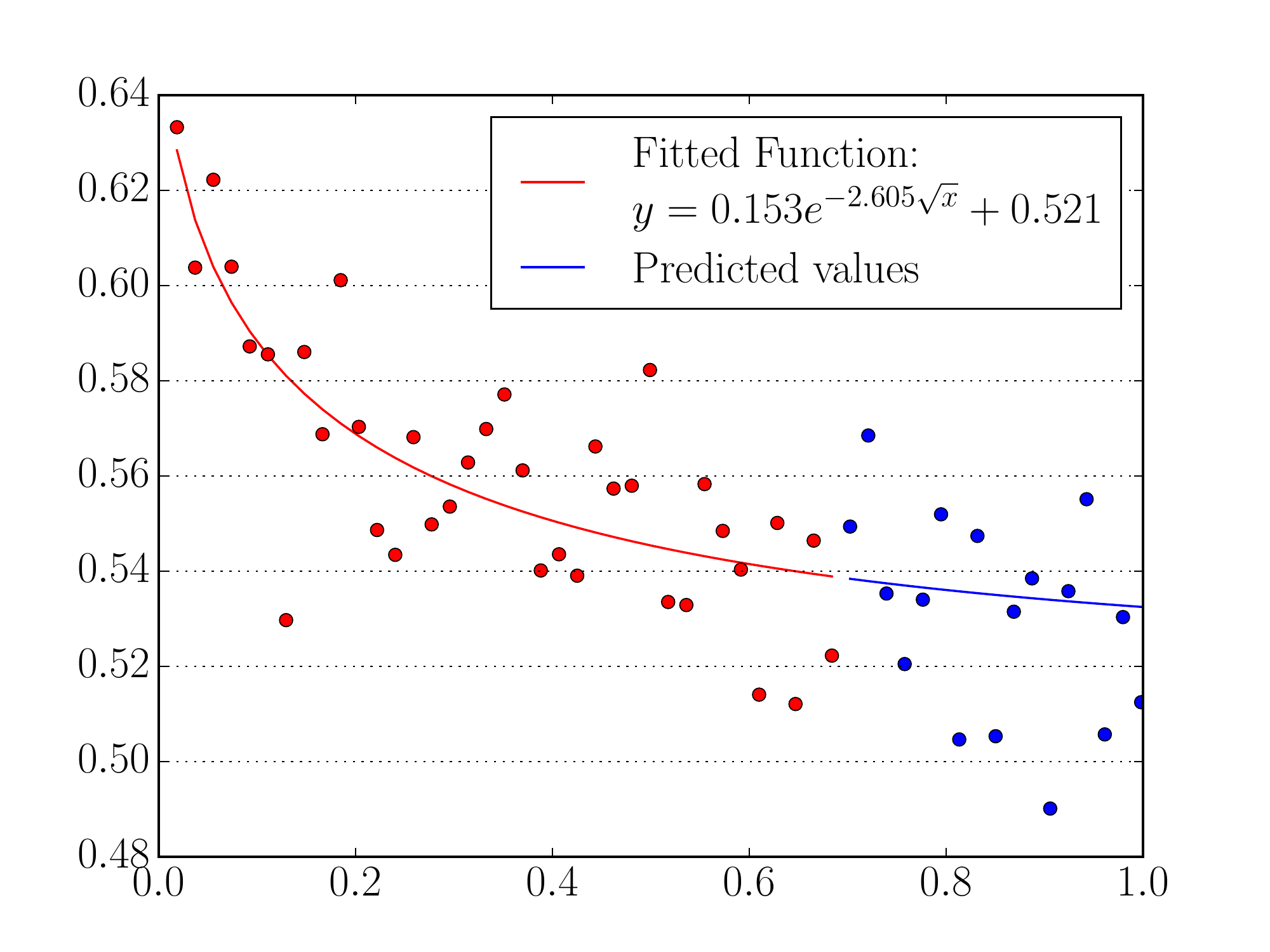}
                \caption{$K=1$, $\delta = 0.016$}
                \label{fig:K_1}
        \end{subfigure}
        \begin{subfigure}[]{0.3\textwidth}
                \includegraphics[width=\textwidth]{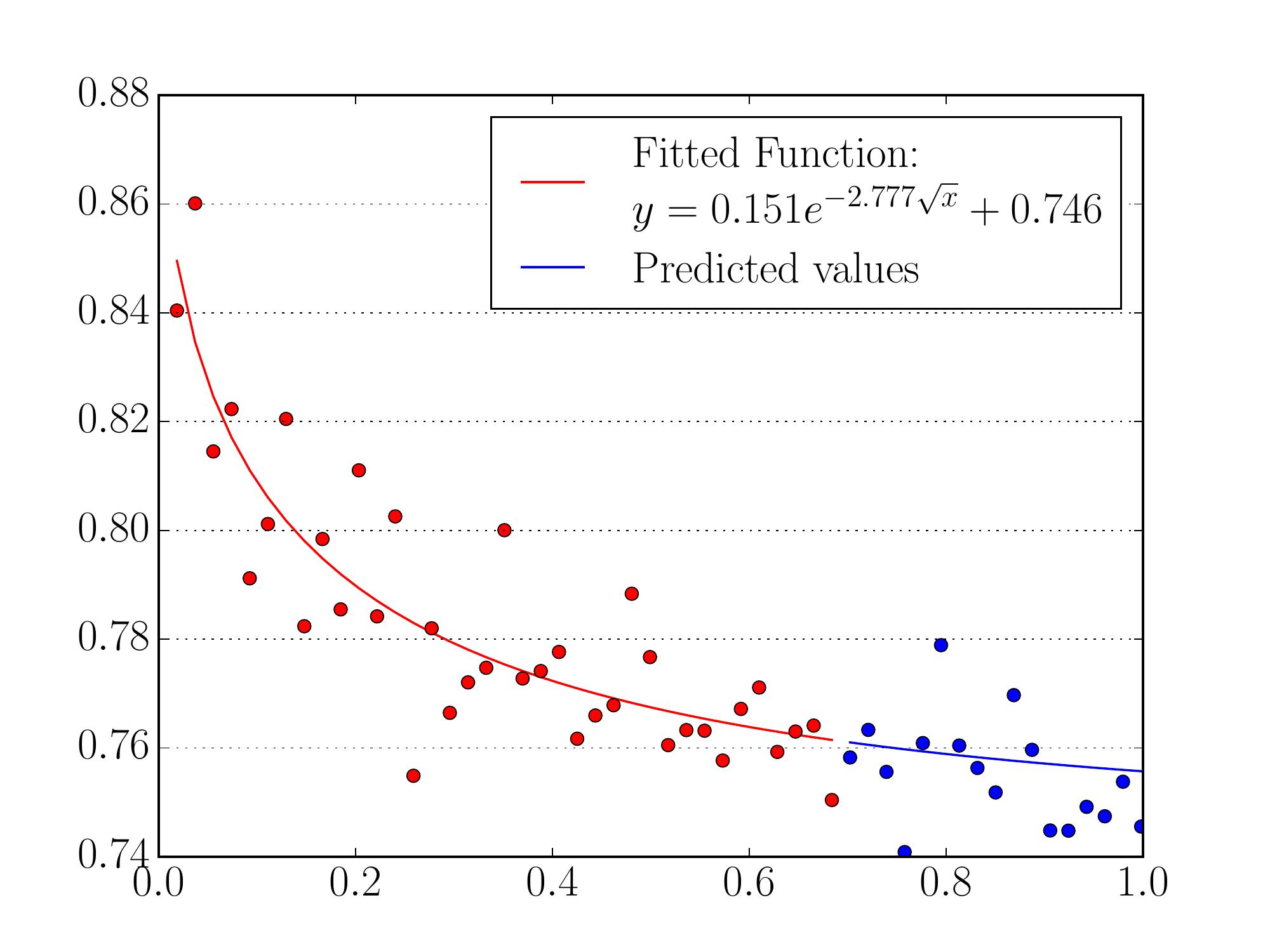}
                \caption{$K=2$, $\delta = 0.007$}
                \label{fig:K_2}
        \end{subfigure}
        \begin{subfigure}[]{0.3\textwidth}
                \includegraphics[width=\textwidth]{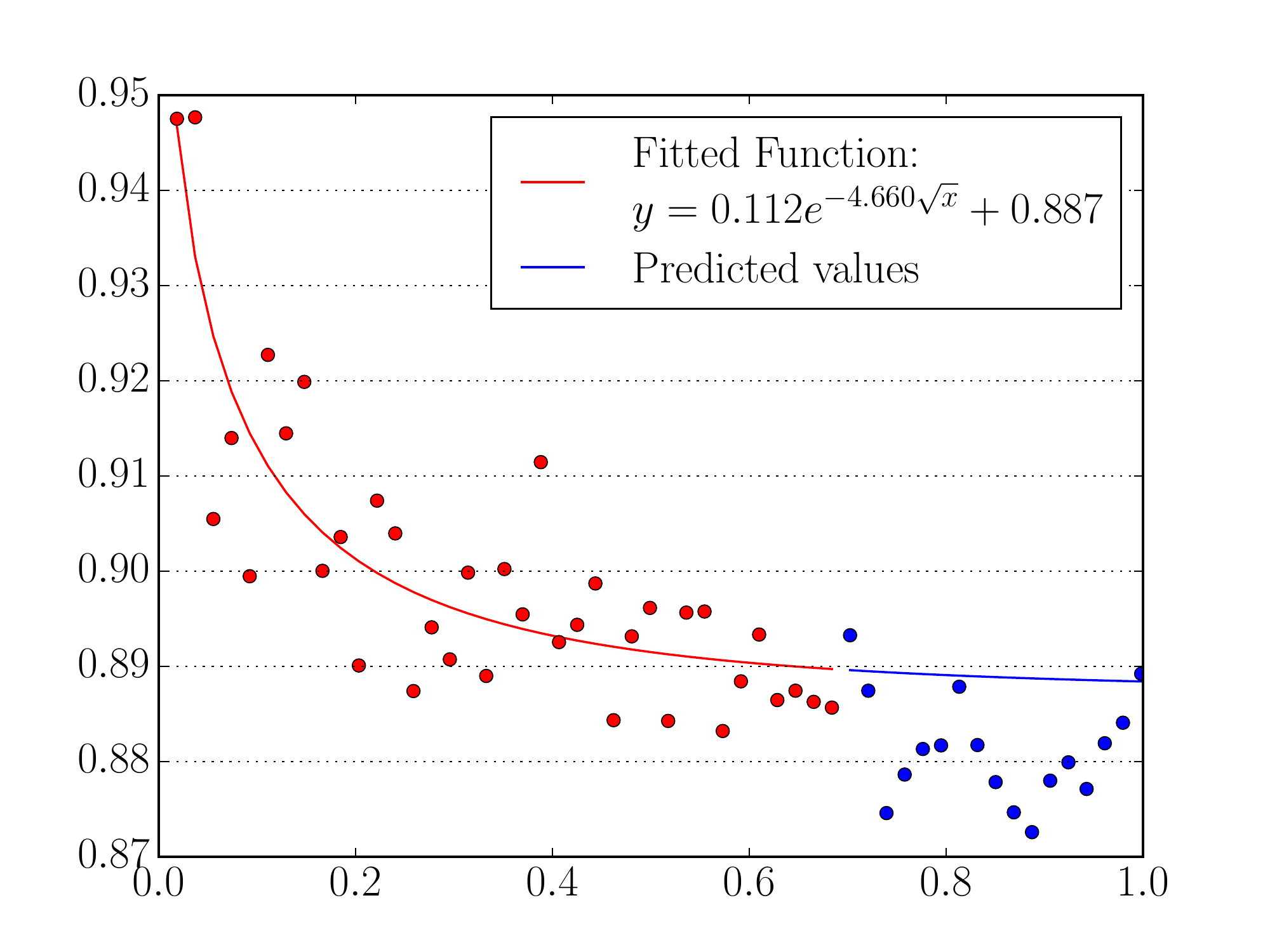}
                \caption{$K=3$, $\delta = 0.008$}
                \label{fig:K_3}
        \end{subfigure}
        \begin{subfigure}[]{0.3\textwidth}
                \includegraphics[width=\textwidth]{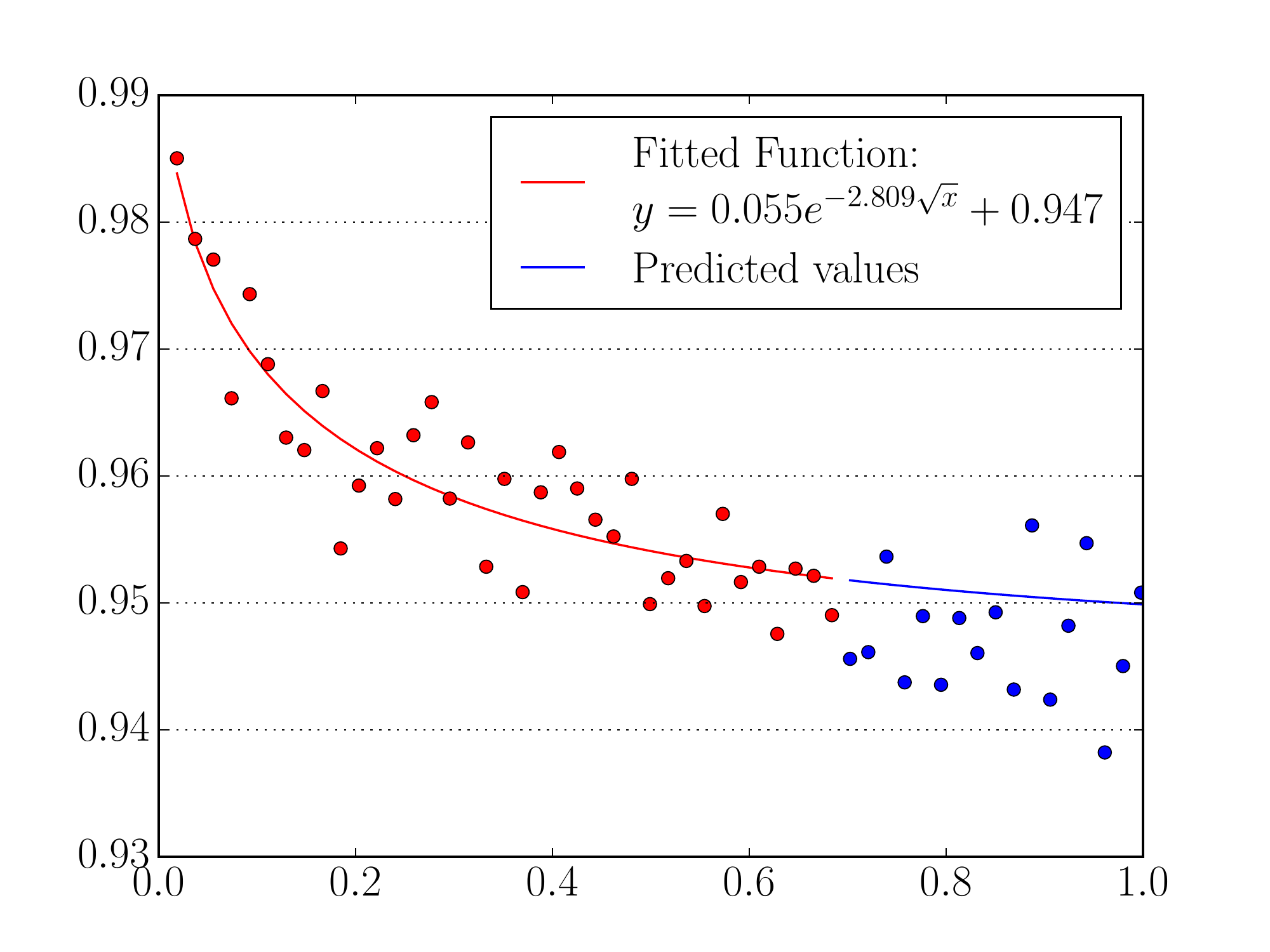}
                \caption{$K=4$, $\delta = 0.005$}
                \label{fig:K_4}
        \end{subfigure}
        \begin{subfigure}[]{0.3\textwidth}
                \includegraphics[width=\textwidth]{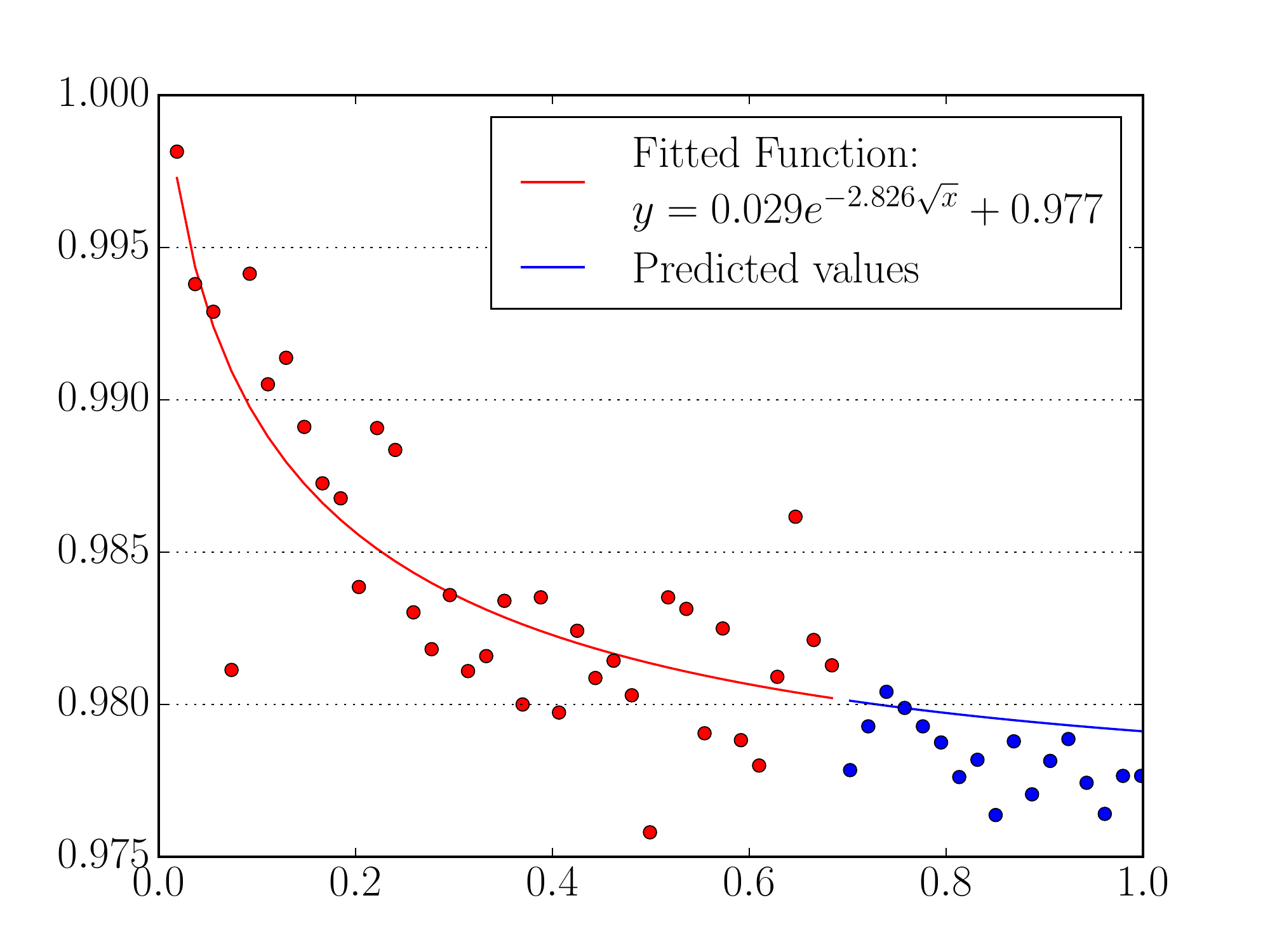}
                \caption{$K=5$, $\delta = 0.001$}
                \label{fig:K_5}
        \end{subfigure}
    \begin{subfigure}[]{0.3\textwidth}
                \includegraphics[width=\textwidth]{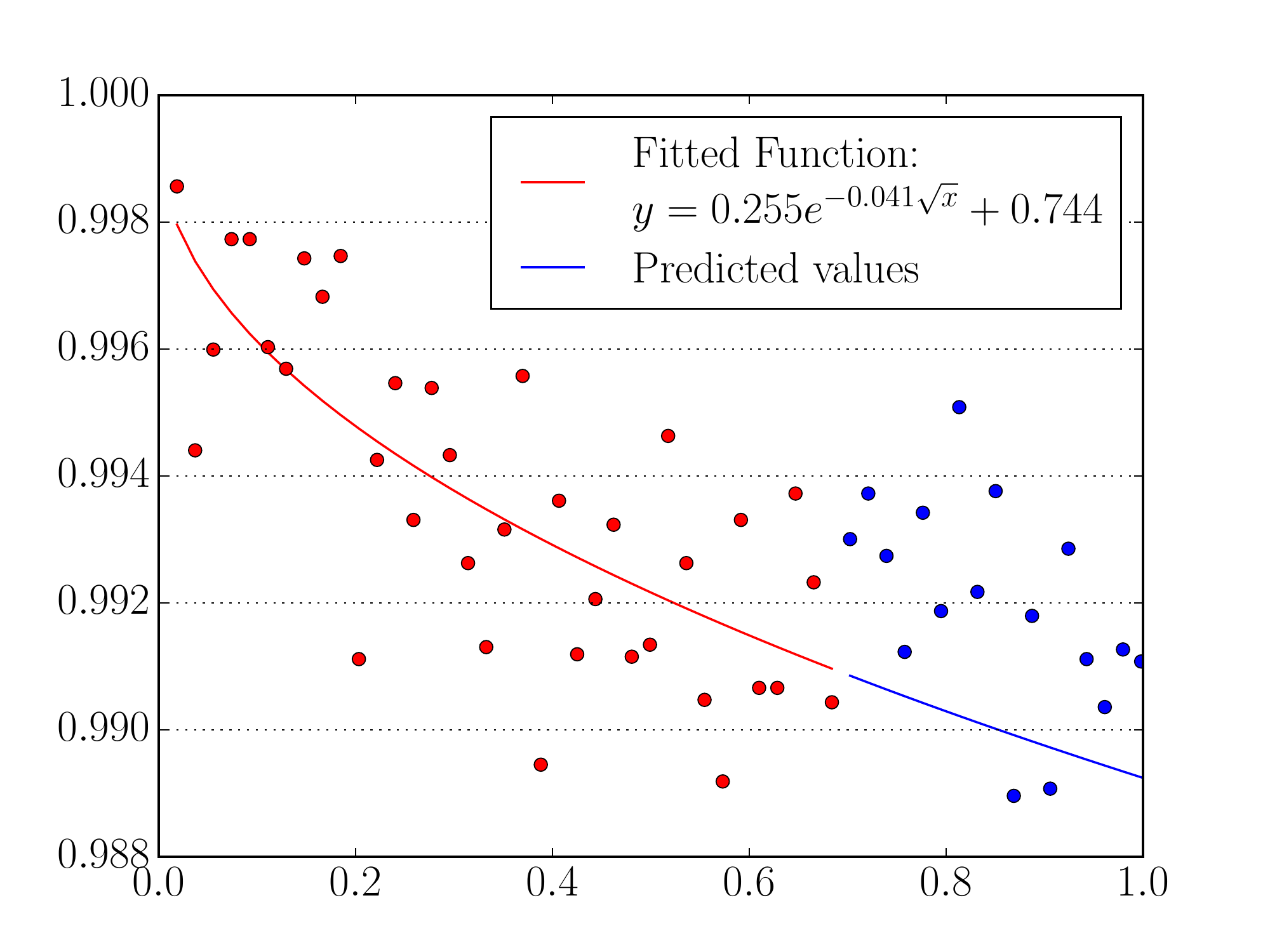}
                \caption{$K=6$, $\delta = 0.002$}
                \label{fig:K_6}
        \end{subfigure}
        \caption{Unicity generalization for different values of $K$, trained all with maximum $37000$ users. The learnt models (i.e., $f(x)$) are present in the legend. $x$-axis corresponds to normalized dataset sizes with a normalization factor of $1/54893$, and $y$-axis depicts sample unicity.}
    \label{fig:unicity_generalization}
\end{figure*}

Information surprisal can be used to measure uniqueness in the population $D$~\cite{Eckersley10}.
In our case, the population is all the Android users worldwide, to which we want to generalize our results.
As information surprisal of any $K$-apps $\{A_1, A_2, \ldots, A_K\}$ over $D$ is equal to $-\log(Pr[A_1,A_2,\ldots, A_K])$, we must need to first measure the co-occurence probability $Pr[A_1,A_2,\ldots, A_K]$ of these apps in $D$.
The co-occurence probability can be easily computed if we can assume that apps co-occur independently in the dataset as $Pr[A_1,A_2,\ldots, A_K] \approx \prod_{i=1}^K Pr[A_i]$, and $Pr[A_i]$ (the popularity of app $A_i$) can be obtained from the download count of $A_i$ available on Google PlayStore.
However, this is not the case in a real-world scenario as there exist correlation between apps installed by a user.
As our dataset is very likely to be too limited to capture this correlation (as our dataset contain only 93K distinct apps whereas there are more than 1.2 million available apps on GooglePlay), we cannot take this approach to measure the uniqueness in the population of Android users.
We rather employ regression analysis on our dataset which does not rely on this correlation information to predict the unicity in a larger dataset.

For regression analysis, we randomly create datasets of different sizes from our original dataset and compute the sample unicity for these datasets of different sizes.
This gives us the tuples ($x$,$y$) where $x$ is the number of users in a particular dataset (independent variable) and $y$ is the calculated unicity value dependent on $x$. 
Here, we assume that the unicity value $y$ only depends on the number of users $x$.  
We must note that, in reality, unicity depends on many factors such 
as the characteristics of the users, how many (un)popular applications users tend to have, etc.
As it is difficult to take into account all these factors either because they are unknown or hard to measure, we assume that unicity in general is a ``proper'' function of only the total number of users in the dataset. That is, all other dependent factors are implicitly incorporated into the model, i.e., the general form of the function.

Once we have these ($x$,$y$) tuples, our goal is to select the best model and its parameters that capture the relation between $x$ and $y$.
The overall approach is as follows: 
we divide our ($x$,$y$) tuples in training and test sets. 
We select the best model (i.e., a function family) based on the general characteristics of application unicity and then learn its exact parameters using our training set.
Finally, using the best model thus obtained, we test its accuracy on the test set.
This model should be able to predict the unicity value for any dataset of arbitrary size.

\paragraph{Training and Testing}
We divide our original dataset in 54 smaller datasets, each of size varying from 1k to 54k.
We take the first 70\%  of all ($x$,$y$) points for training and the last 30\%  (corresponding to larger datasets) for testing.
We deliberately take the last points corresponding to larger datasets for testing set because we aim to evaluate our model performance on larger datasets, i.e., we want to test how accurately the learned model could be extrapolated.

As we divided our datasets by randomly selecting users out of the original dataset, users in the training and testing set may overlap.
However, we found that unicity merely depends on the number of users in the dataset and not specifically on the underlying individuals.
For example, we computed the unicity of 50 different sets of 1000 users selected randomly, and found out that the variance of the measured sample unicity is very small.

\paragraph{Model selection}

To select our model, we first tried linear regression with non-linear basis functions (polynomials of various orders) with and without regularization.
However, they provided very inaccurate predictions of unicity. Finally, we selected the following exponential model describing an exponential decay of unicity:
\begin{align}
\label{eq:model}
f(x) = a\cdot \exp(-b\sqrt{x}) + c
\end{align}

The rationale behind choosing this model is as follows. Figure~\ref{fig:correlation_users_apps} shows that if additional users were added to our dataset, the number of apps would reach the maximum number of apps in the population early as there are fewer apps on GooglePlay than total number of Android users.
This suggests that, after a certain point, additional users would not bring many new apps but still, they would bring new combinations of already existing apps.
The addition of new combinations of apps should lead to the increase in unicity. However, the newly added users can lead to the decrease in unicity as well due to the fact that they can also have many already existing combinations of apps.
As these two effects of adding new users to the dataset run opposite to each other, we suppose that unicity converges to a value greater than zero which is denoted by $c$ in Equation~\ref{eq:model}.
Indeed, as Figure~\ref{fig:unicity_changes_with_users} shows, although unicity decreases with the increase in the user number, the amount of this decrease tends to decrease as well.
A similar observation was made in \cite{Nature13}.
Also, we used square root of $x$ in the exponent in Equation~\ref{eq:model} because taking square root is variance-stabilizing\footnote{\url{https://en.wikipedia.org/wiki/Variance-stabilizing_transformation}}.
In fact, we tried other powers of $x$ in the exponent but square root lead to the best results.

The goal of the regression is to compute parameters $a$,$b$ and $c$ in Equation~\ref{eq:model} from the training set  ($x$,$y$) tuples.
In fact, these parameters might be computed employing either standard non-linear regression directly or by first transforming Equation~\ref{eq:model} into linear form and then applying linear regression.
We use standard non-linear regression because it explicitly computes the lower bound on the unicity value (i.e., $c$ in Formula \ref{eq:model}).
The value of $x$ is normalized\footnote{\url{https://en.wikipedia.org/wiki/Feature_scaling}} by dividing $x$ with the maximum size of the dataset for which we want to predict the unicity value.

\paragraph{Results}

As an error metric, we measure the average absolute error denoted by $\delta$, i.e., 
$$
\delta = (1/n) \sum_{i}^{n} \left| y_i - f(x_i)\right|
$$
where $n$ is the number of predicted points,  $y_i$ is the real unicity value, and $f(x_i)$ is the predicted value.

Figure~\ref{fig:unicity_generalization} presents how our exponential model performs on the test set for different values of $K$.
Although our model can predict the trend of the unicity for large number of users, it slightly overestimates the real unicity in the test set.
Nevertheless, the average error $\delta$ on the test set is only around 0.01. 
As our app dataset is very small as compared to the whole Android population, we cannot evaluate performance of our model for large number of users, e.g., a few million, or the whole Android population.
Therefore, we cannot claim that our model will be able to accurately predict the unicity for datasets having large number of users even if it performs reasonably well on our test data.




\begin{figure}[!t]
        \centering
	\includegraphics[width=0.7\linewidth]{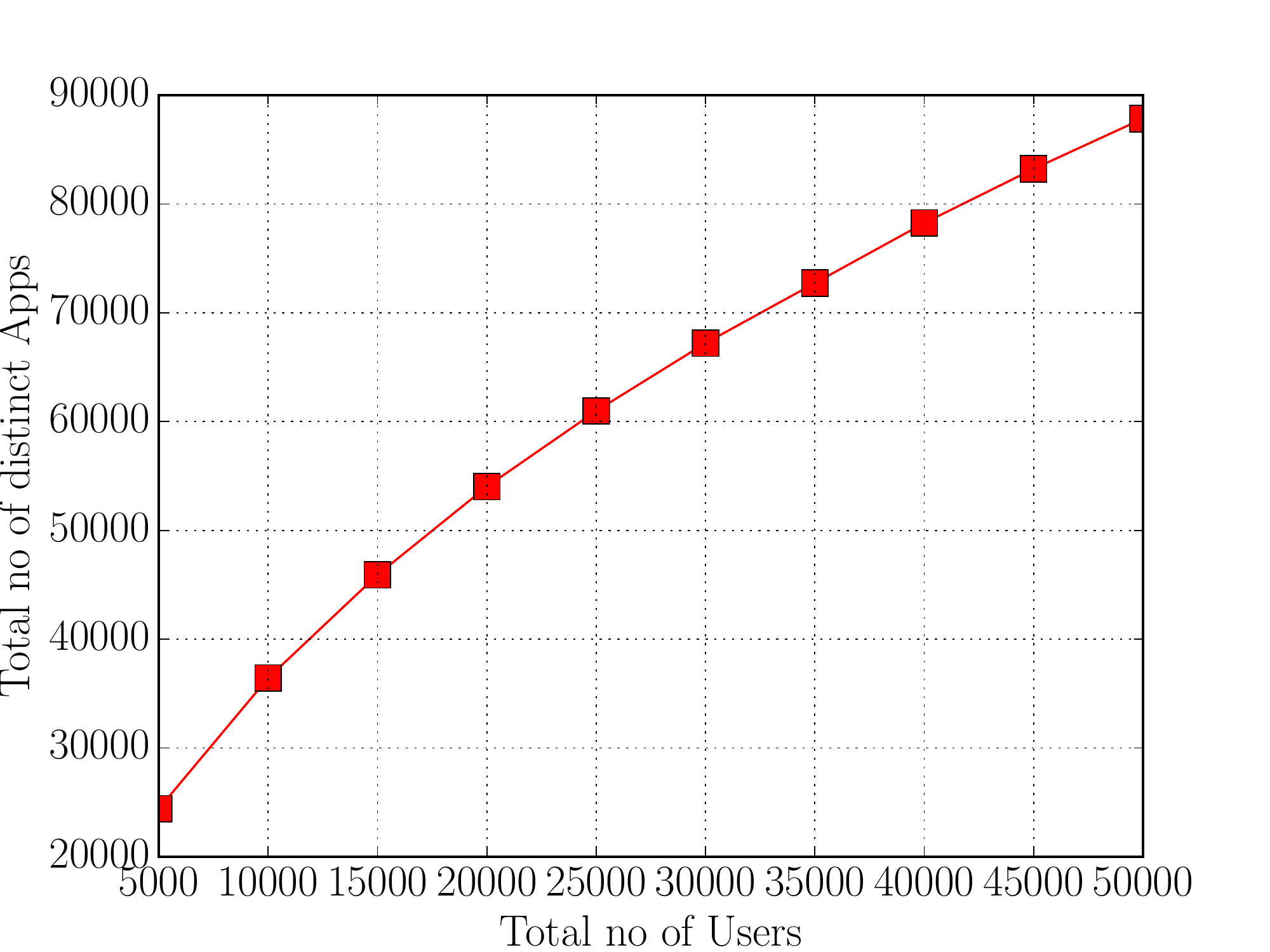}
	\caption{No of distinct apps installed by users}
	\label{fig:correlation_users_apps}
\end{figure}


\begin{figure*}[!t]
        \centering
        \begin{subfigure}[]{0.3\textwidth}
                \includegraphics[width=\textwidth]{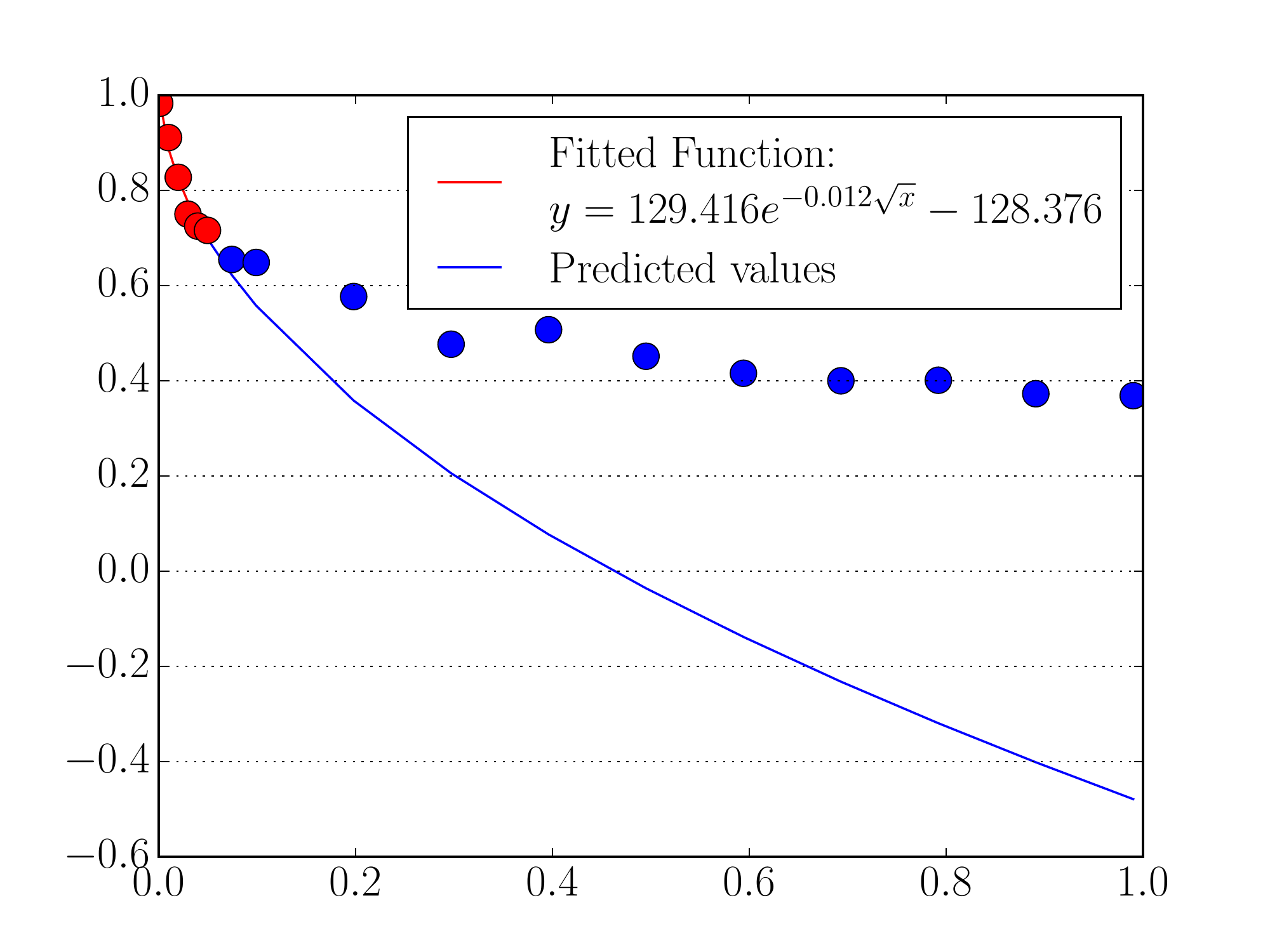}
                \caption{$K=4$, trained with max $|D|=50000$, $\delta = 0.460$}
                \label{fig:K_4_loc_50000}
        \end{subfigure}
    \begin{subfigure}[]{0.3\textwidth}
                \includegraphics[width=\textwidth]{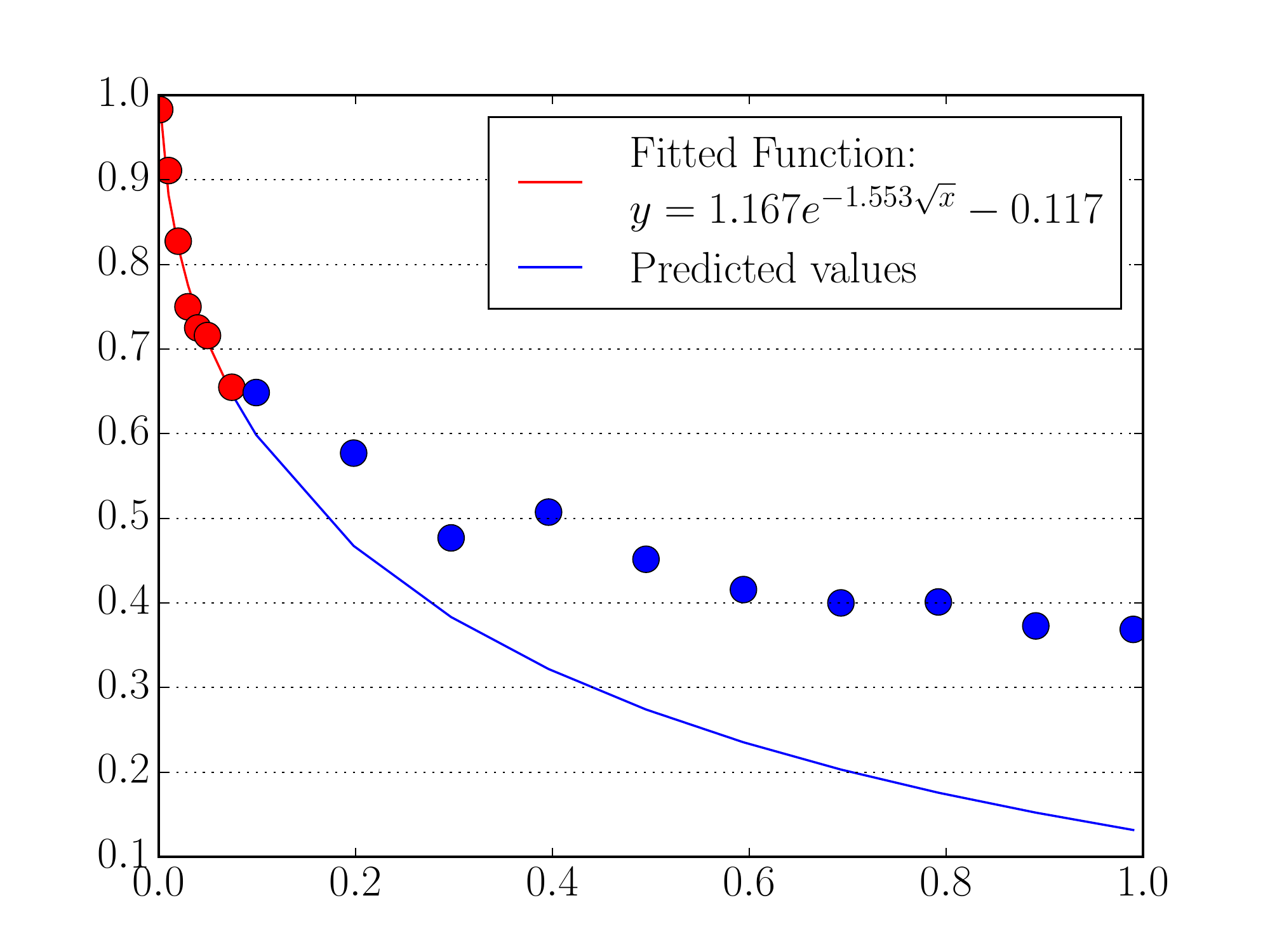}
                \caption{$K=4$, trained with max $|D|=75000$, $\delta = 0.168$}
                \label{fig:K_4_loc_75000}
        \end{subfigure}   
    \begin{subfigure}[]{0.3\textwidth}
                \includegraphics[width=\textwidth]{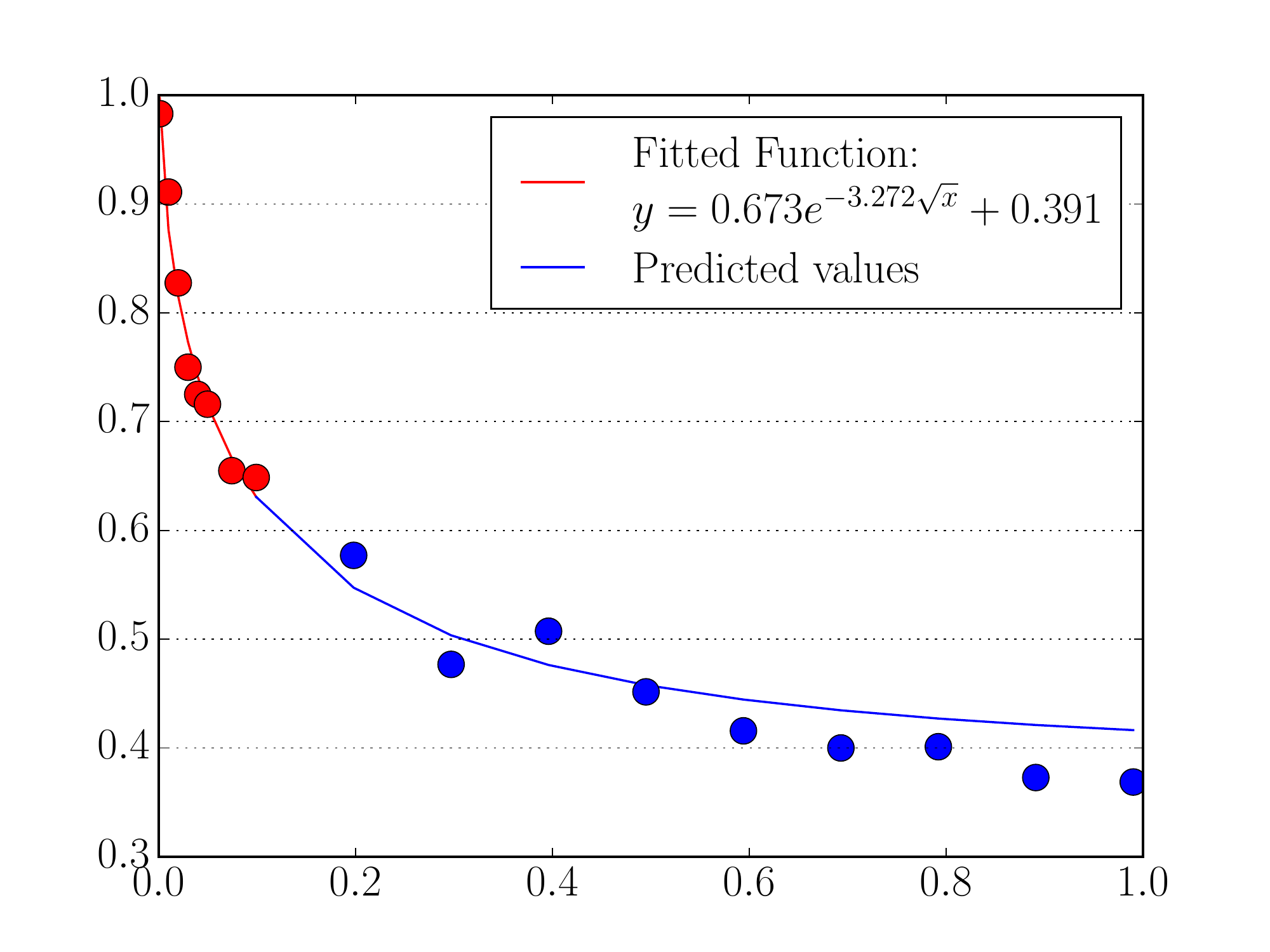}
                \caption{$K=4$, trained with max $|D|=100000$, $\delta = 0.031$}
                \label{fig:K_4_loc_100000}
        \end{subfigure}
    \begin{subfigure}[]{0.3\textwidth}
                \includegraphics[width=\textwidth]{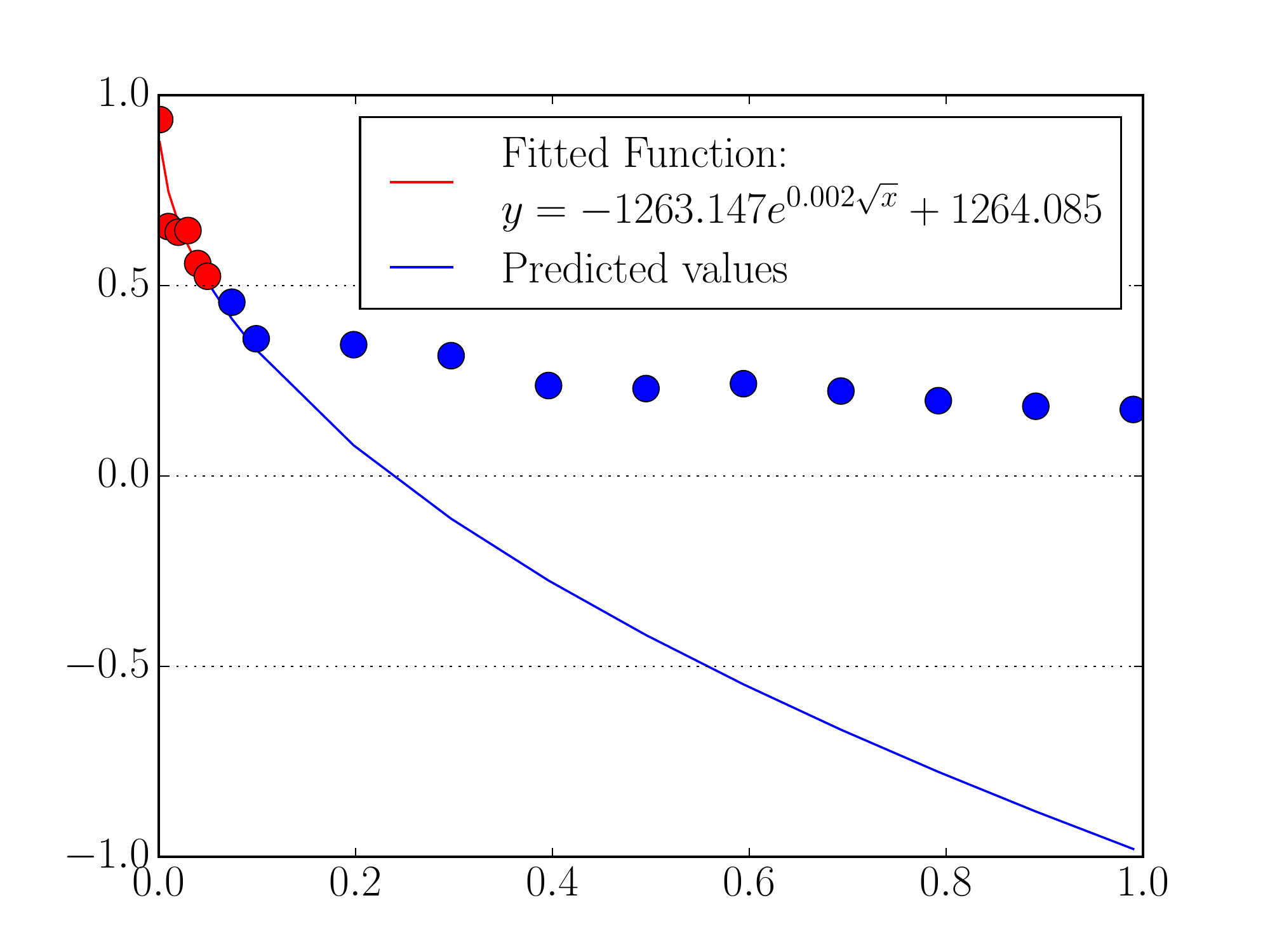}
                \caption{$K=3$, trained with max $|D|=50000$, $\delta = 0.618$}
                \label{fig:K_3_loc_50000}
        \end{subfigure}
    \begin{subfigure}[]{0.3\textwidth}
                \includegraphics[width=\textwidth]{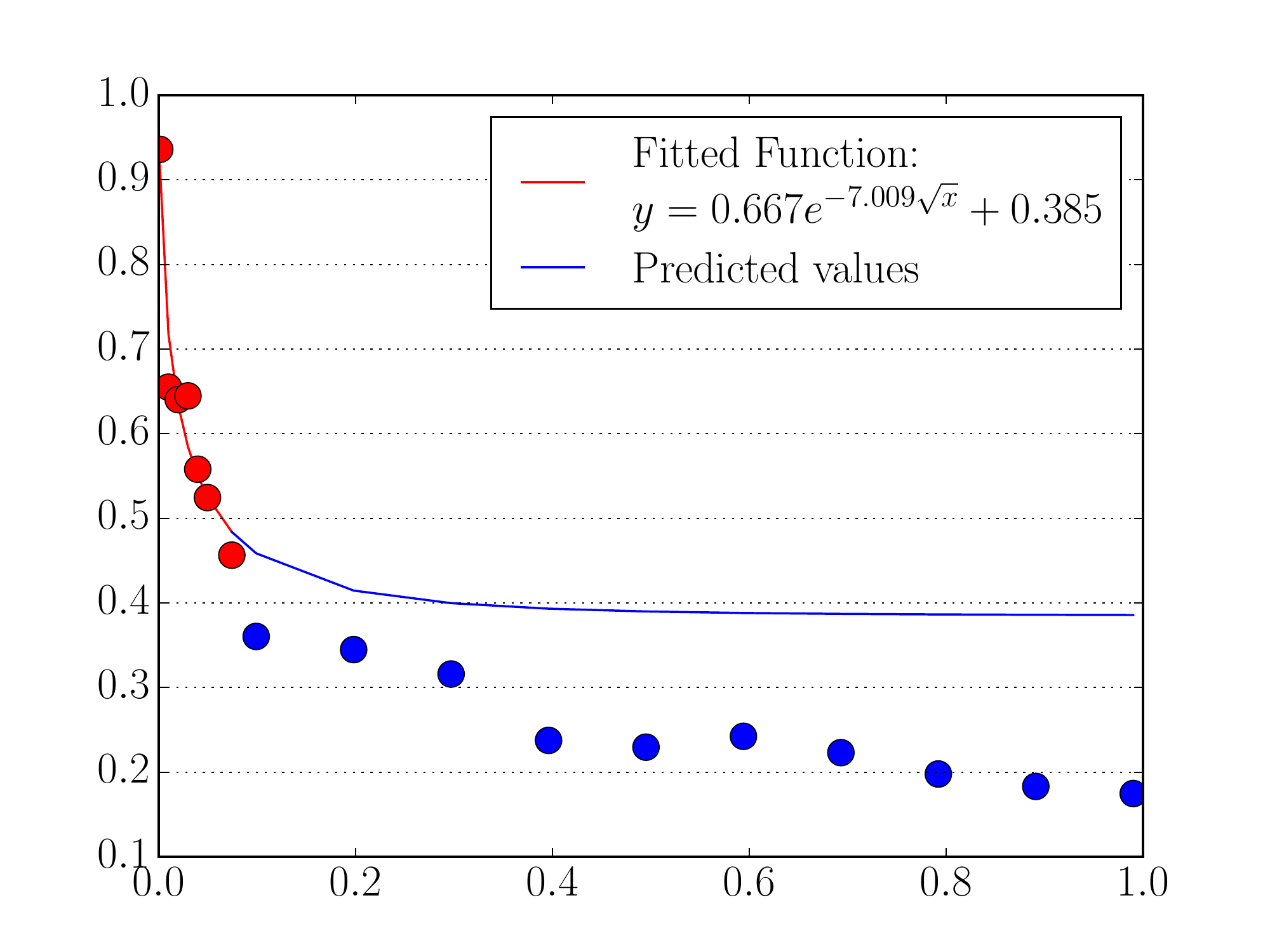}
                \caption{$K=3$, trained with max $|D|=75000$, $\delta = 0.148$}
                \label{fig:K_3_loc_75000}
        \end{subfigure}
    \begin{subfigure}[]{0.3\textwidth}
                \includegraphics[width=\textwidth]{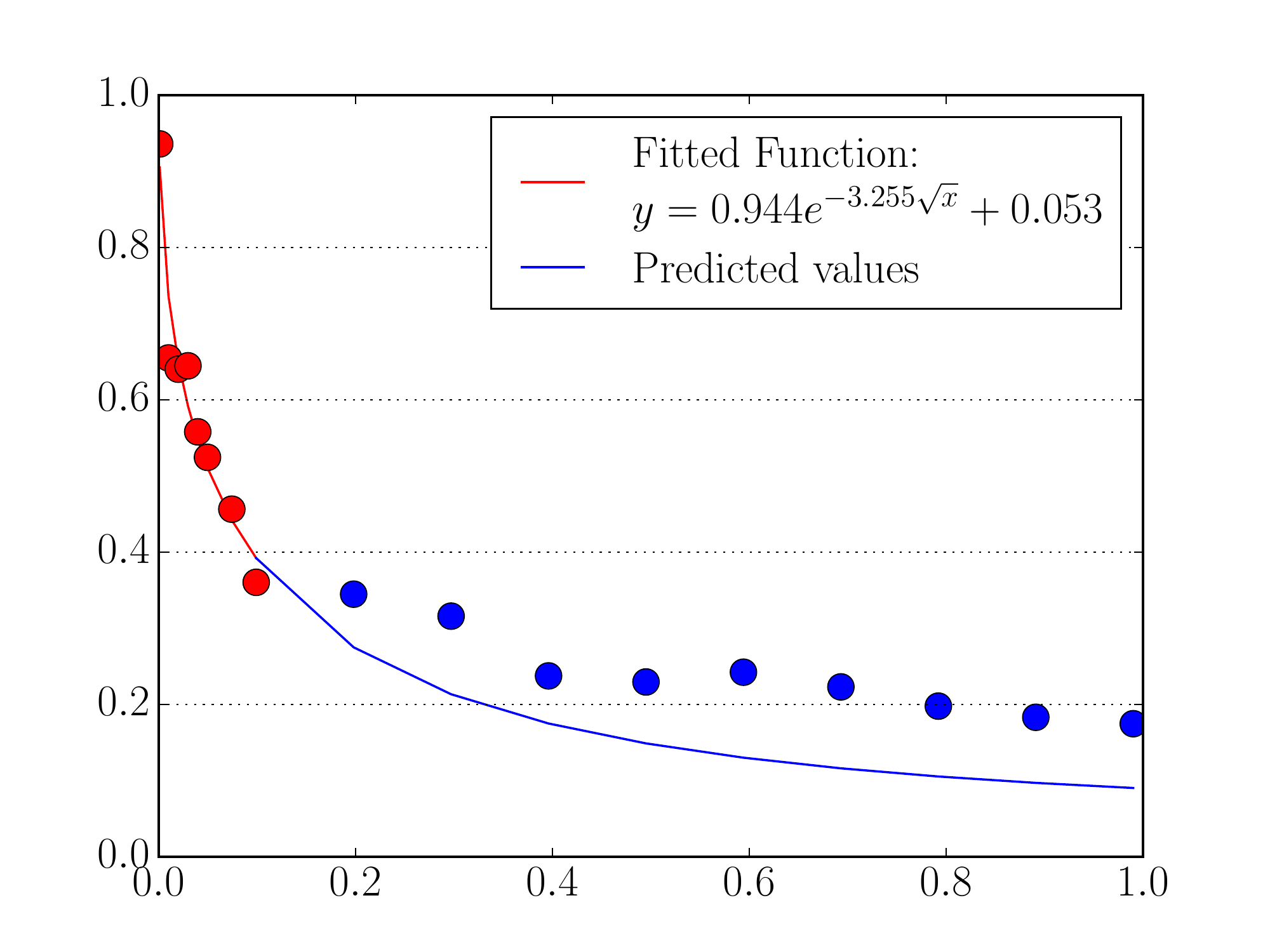}
                \caption{$K=3$, trained with max $|D|=100000$, $\delta = 0.089$}
                \label{fig:K_3_loc_100000}
        \end{subfigure}
    \begin{subfigure}[]{0.3\textwidth}
                \includegraphics[width=\textwidth]{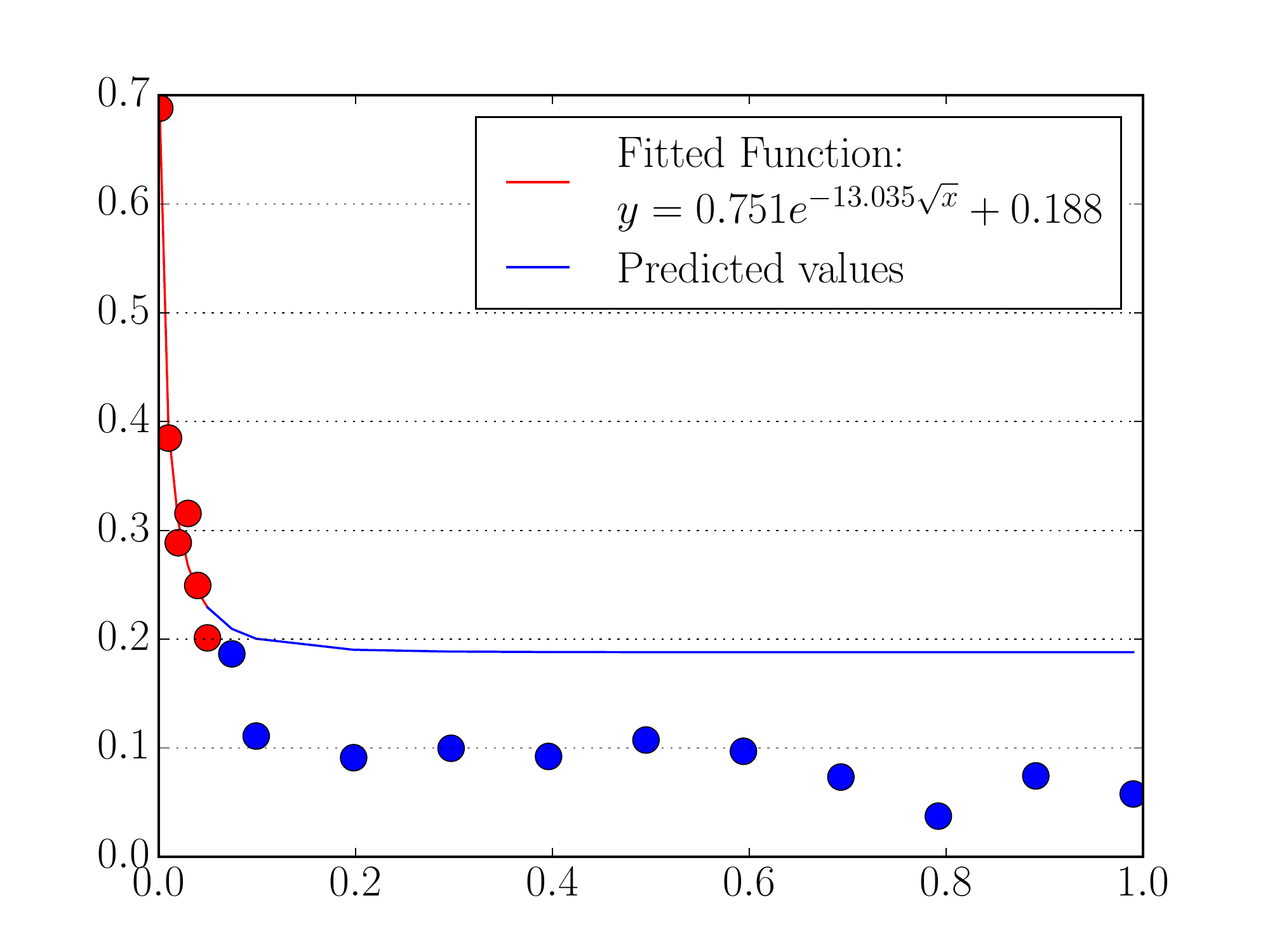}
                \caption{$K=2$, trained with max $|D|=50000$, $\delta = 0.098$}
                \label{fig:K_2_loc_50000}
        \end{subfigure}
    \begin{subfigure}[]{0.3\textwidth}
                \includegraphics[width=\textwidth]{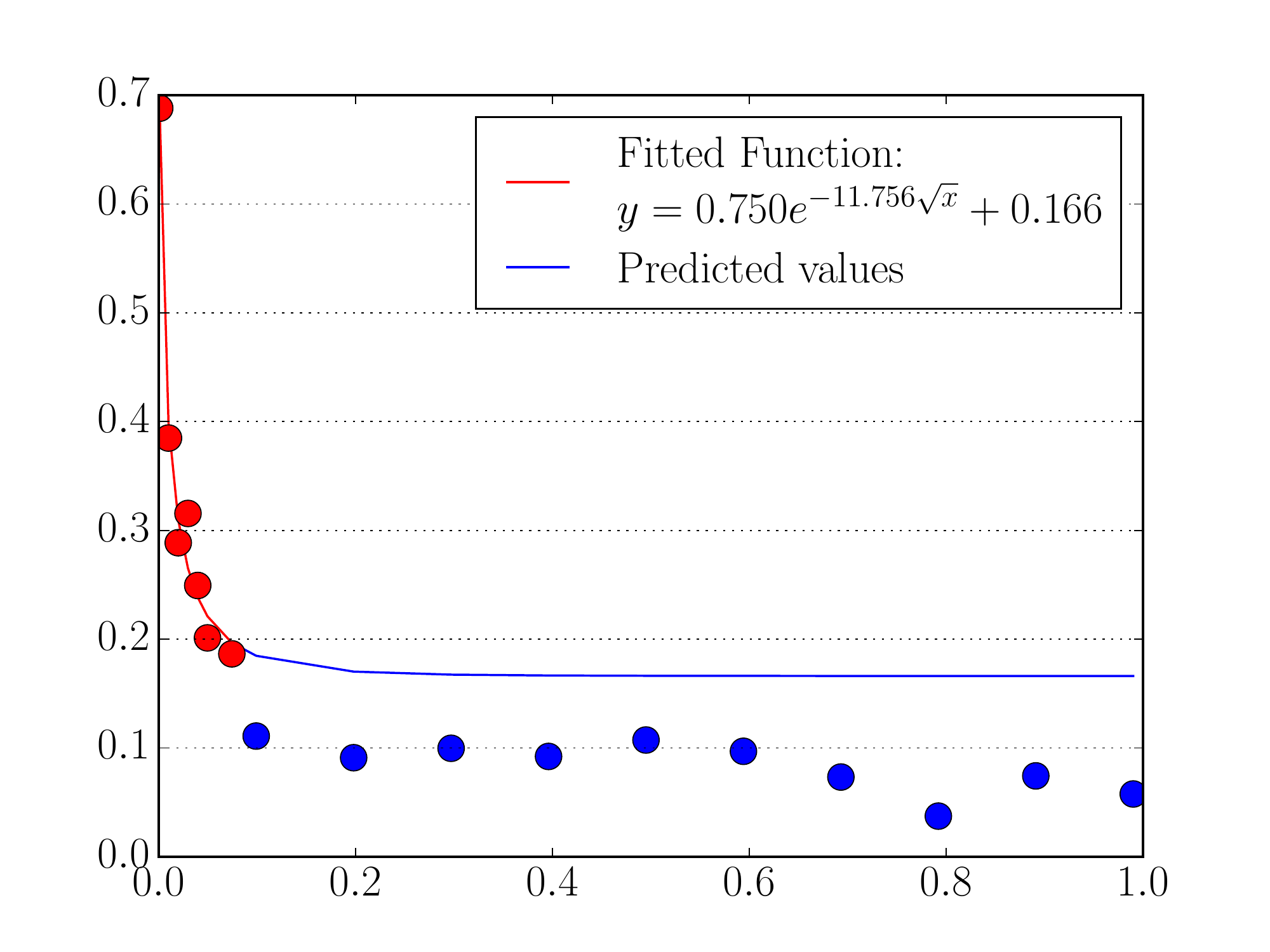}
                \caption{$K=2$, trained with max $|D|=75000$, $\delta = 0.085$}
                \label{fig:K_2_loc_75000}
        \end{subfigure}
    \begin{subfigure}[]{0.3\textwidth}
                \includegraphics[width=\textwidth]{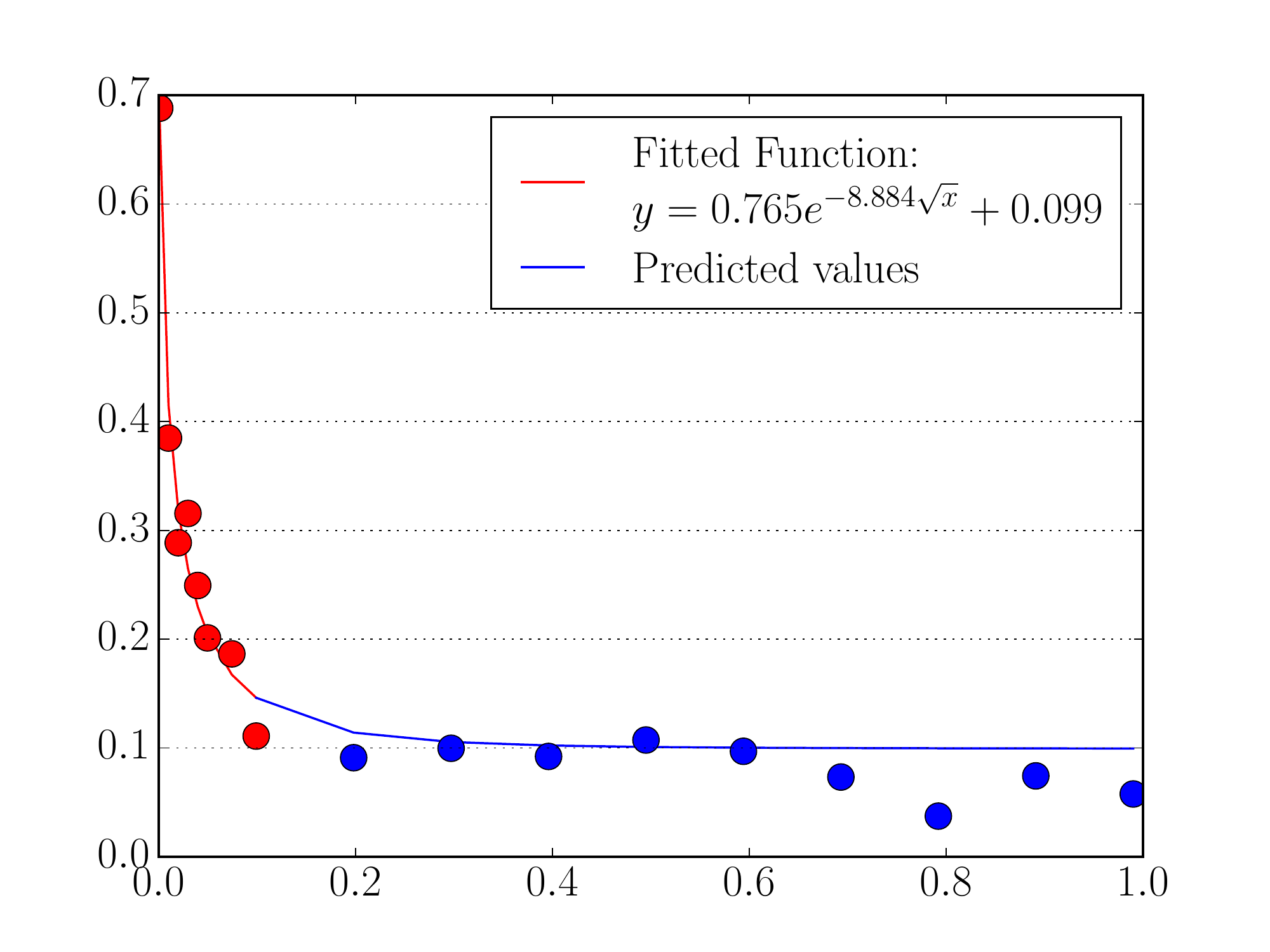}
                \caption{$K=2$, trained with max $|D|=100000$, $\delta = 0.023$}
                \label{fig:K_2_loc_100000}
        \end{subfigure}
        \caption{Unicity generalization for different values of $K$ for location data, trained with datasets of different sizes. The learnt models (i.e., $f(x)$) are present in the legend. $x$-axis corresponds to normalized dataset sizes with a normalization factor of $1/10^6$, and $y$-axis shows sample unicity.}\label{fig:unicity_location_generalization}
\end{figure*}

\paragraph{Model validation on a different dataset}
To further demonstrate that our model is a meaningful approach to predict unicity in large populations, we test it on a large mobility dataset provided by a telecom operator in Europe.
This dataset contains the Call Data Records (CDR) of 1 million users from a large european city over 6 weeks. 
Each record in the dataset corresponds to a user and contains the set of his/her visited cell towers, where the total number of different cell towers is 1303. 
From this dataset, we created smaller datasets of different sizes ($x$ ranging from 1000 to 1 million users).
Then, we trained our model on the first 6 points (i.e., until the dataset size of $50,000$).  
Figure \ref{fig:unicity_location_generalization} shows that the model does not predict accurately the unicity for large datasets and the error can be as large as 0.6.
Next, we trained the model on the first 7 points (i.e., until the dataset size of $75,000$).
In this case, we find that the model performs significantly better than in the previous case with an error of $0.13$ on average.
Finally, we trained the model on the first 8 points (i.e., until the dataset size of $100,000$). 
In this case, we find that the model have accurate predictions for larger datasets, e.g., for a test dataset of size 1 million (10 times more than the maximum size of the dataset used in the training phase), and the error is $0.05$ on average.

We find that a mobility dataset of 0.1 million users is sufficient to learn an accurate model and predict the unicity values for a larger mobility dataset.
However, as we saw earlier, the model is not able to predict well the unicity of a large population if it is trained on a dataset of only $50,000$ users.
This may suggest that $50,000$ users might be too small in general to learn an accurate model and therefore, our app dataset of $50,000$ users might not be sufficient to learn the model.
On the other hand, even if this model performs well on a mobility dataset with 1 million users, it does not necessarily imply its good performance on large application datasets due to the different data and user characteristics.
Nevertheless, these two experiments together show that our exponential model can be a meaningful approach to predict unicity in large populations.

\section{Conclusion}
The paper shows that the list of installed applications is quite unique.
This result has few implications on user's privacy. 
First, since this metadata is unique, it could easily be used to profile users, e.g., based on the category of installed apps. 
This is what Twitter is doing to provide interest-based targeted ads to users.
Second, as a combination of even small number of installed apps is quite unique, this information could be used to re-identify users in a dataset. 
For example, if Twitter decided to publish the list of apps installed by its users on their smartphones, it would be easy for anyone, who knows 4 or 5 apps of a given user, to re-identify him and discover other apps that are also installed on his smartphone. 
This makes anonymization of this information challenging, and this is part of our future work.

In general, mobile users reveal many pieces of information that, when combined together, provide a lot of information about users and can be used to build personalized profiles. 
Since people are unique in many different known and unknown ways, preserving the privacy of mobile users is very challenging. 
New protection measures need to be devised.



\section*{Acknowledgements}
We thank the whole Carat team for sharing their dataset with us. 
Also, our special thanks go to H.~Truong, N.~Asokan and S.~Tarkoma for discussions related to the dataset. 
This work is partially funded by Inria project lab CAPPRIS.

\bibliographystyle{abbrv}
\bibliography{paper_arxiv}

\appendix

\section{Proof of Theorem \ref{THM:MARKOV}}
\label{sec:app}

\paragraph{Metropolis-Hastings algorithm} Consider an ergodic  Markov  chain $\mc{G}$ with transition graph $G(\Lambda, E)$ and transition matrix $P_{\mc{G}}$, where $\Lambda$ is the finite state space.
 For each $x\in \Lambda$, 
let $\kappa: \Lambda \times \Lambda \rightarrow [0,1]$ denote a (not necessarily symmetric) proposal probability distribution function such that for all $x\in \Lambda$, $\kappa(x,x)  + \sum_{y\neq x} \kappa(x,y) = 1$.
The transitions of $\mc{G}$ are defined according to the \emph{Metropolis-Hastings rule} as follows. From any state $x \in \Lambda$, first select an $y\in \Lambda$ such that $(x,y) \in E$ with probability $\kappa(x,y)$. Then,
``accept" the transition from $x$ to $y$ with probability $\min\left(1, \frac{\pi(y)\kappa(y,x)}{\pi(x)\kappa(x,y)}\right)$, otherwise stay at $x$. 
It is not difficult to show \cite{LevinPW09mixing} that such a Markov chain $\mc{G}$ is reversible, i.e., $\pi(x)P_{\mc{G}}(x,y) = \pi(y)P_{\mc{G}}(y,x)$ and therefore its stationary distribution $\pi$ is unique \cite{LevinPW09mixing}. Consequently,  after sufficiently many transitions, the distribution of states will be very close to $\pi$. Notice that there is no need to compute the normalization constant of $\pi$,  even if $|\Omega|$ is very large (i.e., an exponential function of $K$ in our problem), because it appears both in the numerator and denominator of the transition probabilities. 


\begin{proof}[of Theorem \ref{THM:MARKOV}]
In each iteration, $\mathcal{M}$ can select any individual $u$ in $D$.  Hence, at any state, $\mathcal{M}$ can visit any state in $\Omega^K$. Therefore, $\mathcal{M}$ is connected and aperiodic. Also notice that 
\begin{align*}
\frac{\pi(C) \kappa(C,S)}{\pi(S) \kappa(S,C)} &= \frac{\kappa(C,S)}{\kappa(S,C)} \\
&= \frac{\sum_{\forall u : U_u \supseteq S}  1 / {|U_u| \choose K}}{ \sum_{\forall u : U_u \supseteq C}  1 / {|U_u| \choose K}}  \\
&=\frac{\sum_{\forall u : U_u \supseteq S} K!/|U|  \prod_{i = 1}^{K} \frac{1}{|U_u| - K + i}}{ \sum_{\forall u : U_u \supseteq C} K!/|U|  \prod_{i = 1}^{K} \frac{1}{|U_u| - K + i}} \\
&=  q(S)/q(C)
\end{align*}
where $\pi$ is the uniform distribution over $\Omega^K$. Therefore, a candidate next state is accepted with probability $\min\left(1, \frac{\pi(C)\kappa(C,S)}{\pi(S)\kappa(S,C)}\right) = \min\left(1, q(S)/q(C)\right)$, which means that
$\mathcal{M}$ is reversible and its unique stationary distribution is $\pi$ according to the Metropolis-Hastings rule.
\end{proof}

\section{Proof of Theorem \ref{THM:MIXING}}

In order to prove $\mathcal{M}$'s mixing time, we use a standard coupling argument which is described below.

\begin{definition}[Coupling]
\label{def:coupling}
A coupling of a Markov chain $\mathcal{M}$ on state space $\Omega$ is a Markov chain on $\Omega \times \Omega$ defining a stochastic process  $(X_t,Y_t)_{t=0}^{\infty}$ such that
\begin{itemize}
\item each of the processes $(X_t,\cdot)$ and $(\cdot,Y_t)$, viewed in isolation, is a faithful copy of the Markov chain $\mathcal{M}$ (given initial states $X_0 = x$ and $Y_0 = y$); that is, $Pr[X_{t+1} = b|X_t = a] = P_{\mathcal{M}}(a,b) = Pr[Y_{t+1} = b|Y_t = a]$;
and
\item if $X_t =Y_t$, then $X_{t+1} =Y_{t+1}$.
\end{itemize}
\end{definition}

Condition 1 ensures that each process, viewed in isolation, is just simulating the original chain $\mathcal{M}$, and the coupling is designed such that $X_t$ and $Y_t$ tend to coalesce (i.e., move closer to each other according to some notion of distance). Once they meet, Condition 2 guarantees that they will move together forward. The time of this coalescence can be used to upper bound the mixing time which is shown by the next lemma.

\begin{lemma}[Coupling lemma \cite{LevinPW09mixing}]
\label{lem:coupling}
Let $(X_t,Y_t)_{t=0}^{\infty}$ be a coupling of a Markov chain $\mathcal{M}$. 
For initial states $x,y$ let
$T^{x,y} = \min\{ t : X_t = Y_t| X_0 =x, Y_0 = y \}$
denote the random variable describing the time until $X_t$ and $Y_t$ coalesce. Then
$$
||P^t_{\mc{M}} - \pi||_{\mathit{tv}} \leq  \max_{x,y \in \Omega}Pr[T^{x,y} > t] 
$$
\end{lemma}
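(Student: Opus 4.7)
The plan is to prove the bound in two conceptual steps: first establish the standard coupling inequality that compares two marginals of a joint distribution via the probability that the coordinates disagree, and then apply it with a carefully chosen randomized initial condition that uses stationarity of $\pi$ to reduce $\|P^t_{\mc{M}} - \pi\|_{\mathit{tv}}$ to the worst-case coalescence probability over deterministic pairs of starting states.

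First I would record the elementary coupling inequality: if $(X,Y)$ is any joint distribution on $\Omega \times \Omega$ with marginals $\mu$ and $\nu$, then for every event $A \subseteq \Omega$,
$$|\mu(A) - \nu(A)| = |\Pr[X\in A] - \Pr[Y\in A]| \leq \Pr[X \neq Y],$$
because the symmetric difference $\{X \in A\} \triangle \{Y \in A\}$ is contained in $\{X \neq Y\}$. Maximizing over $A$ gives $\|\mu - \nu\|_{\mathit{tv}} \leq \Pr[X \neq Y]$.

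Next, fix an arbitrary state $x \in \Omega$ and run the coupling $(X_t, Y_t)$ from the randomized initial pair $(X_0, Y_0) = (x, Z)$, where $Z \sim \pi$ is drawn independently. Condition~1 of Definition~\ref{def:coupling} forces the marginals to evolve as the original chain, so $X_t$ has law $P^t_{\mc{M}}(x, \cdot)$ and, by stationarity of $\pi$, $Y_t$ has law $\sum_z \pi(z) P^t_{\mc{M}}(z, \cdot) = \pi$. Applying the inequality from the previous paragraph to the pair $(X_t, Y_t)$ yields
$$\|P^t_{\mc{M}}(x, \cdot) - \pi\|_{\mathit{tv}} \leq \Pr[X_t \neq Y_t].$$
Condition~2 of Definition~\ref{def:coupling} guarantees that the chains stick together once they meet, so $\{X_t \neq Y_t\} = \{T^{X_0, Y_0} > t\}$. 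Conditioning on $Y_0 = y$ and using that $\pi$ is a probability distribution,
$$\Pr[X_t \neq Y_t] = \sum_{y \in \Omega} \pi(y)\,\Pr[T^{x,y} > t] \leq \max_{y \in \Omega} \Pr[T^{x,y} > t] \leq \max_{x',y' \in \Omega} \Pr[T^{x',y'} > t].$$
Since this bound holds for every starting state $x$, taking the maximum over $x$ in the definition $\|P^t_{\mc{M}} - \pi\|_{\mathit{tv}} = \max_x \tfrac{1}{2}\sum_y |P^t_{\mc{M}}(x,y) - \pi(y)|$ gives the claimed inequality.

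The argument is entirely standard and there is no serious obstacle; the only slightly non-obvious move is the randomized initialization $Y_0 \sim \pi$, which is what allows us to compare $P^t_{\mc{M}}(x, \cdot)$ to $\pi$ itself rather than to $P^t_{\mc{M}}(y, \cdot)$. Everything else is bookkeeping about marginals, the characterization of total variation distance by indicator events, and the elementary observation $\{X_t \neq Y_t\} = \{T > t\}$ that follows from the coalescence property of the coupling.
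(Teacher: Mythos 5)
Your proof is correct. Note that the paper itself does not prove this lemma --- it is quoted with a citation to Levin--Peres--Wilmer --- so there is no in-paper argument to compare against; your write-up is essentially the standard textbook proof from that reference. The two ingredients you use are exactly the right ones: the elementary coupling inequality $\|\mu-\nu\|_{\mathit{tv}}\leq \Pr[X\neq Y]$ (together with the identity $\tfrac{1}{2}\sum_y|\mu(y)-\nu(y)|=\max_A|\mu(A)-\nu(A)|$, which you should perhaps state explicitly since the paper defines total variation via the $\tfrac{1}{2}\sum$ form), and the randomized initialization $Y_0\sim\pi$ combined with stationarity; the final averaging $\sum_y\pi(y)\Pr[T^{x,y}>t]\leq\max_y\Pr[T^{x,y}>t]$ and the observation $\{X_t\neq Y_t\}=\{T>t\}$ from the sticking condition close the argument. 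An equivalent variant keeps deterministic initial pairs, first proves $\|P^t_{\mc{M}}(x,\cdot)-P^t_{\mc{M}}(y,\cdot)\|_{\mathit{tv}}\leq\Pr[T^{x,y}>t]$, and then writes $\pi=\sum_y\pi(y)P^t_{\mc{M}}(y,\cdot)$ and applies convexity of the total variation norm; this buys strict conformity with Definition~\ref{def:coupling}, which fixes the initial states, but is otherwise the same argument with the averaging moved outside the coupling.
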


\begin{proof}[of Theorem \ref{THM:MIXING}]
Define a coupling $(X_t, Y_t)$ as follows. Let $X_t$ and $Y_t$ choose the same individual $u$ and subset $C$ in Line 6 and 7 of Algorithm \ref{alg:generic_chain}, respectively. This is a valid coupling according to Definition \ref{def:coupling}, since both $X_t$ and $Y_t$ are the exact copies of $\mathcal{M}$, and they move together after they coalesce.

Let $p(x) = \kappa(\cdot, x)$ denote the probability that $x=C$ is selected in Line 6 of Algorithm \ref{alg:generic_chain}.
Let $X_0=x$ and $Y_0 = y$, and, w.l.o.g., $p(x) \leq p(y)$. Due to the coupling rule, $X_t$ and $Y_t$ can coalesce at any time, since $P_\mathcal{M}(x,y) > 0$ for all $x,y \in \Omega$. This happens when both $X_t$ and $Y_t$ select a state $z$ such that $p(z) \leq p(x) \leq p(y)$, since $q(z) \leq q(x) \leq q(y)$ will also hold. Let $U_{\max} = \max_u U_u$. For any $x, z \in \Omega$, where $z$ occurs only in $U_{\max}$, $p(z) \leq p(x)$.
Indeed, 
\begin{align*}
\label{ineq:1}
p(x) &= \frac{1}{|U|}\sum_{\forall u : U_u \supseteq x}  \frac{1}{{|U_u| \choose K}} \\
& \geq\frac{1}{|U|} \frac{1}{{|U_{\max}| \choose K}}\\
&\geq  p(z) 
\end{align*}

Hence, $X_t$ and $Y_t$ coalesce as soon as they select any  $z \in \Omega$ which occur only in the largest record in $D$. Therefore,
\begin{align}
||P^t_{\mc{M}} - \pi||_{\mathit{tv}} &\leq \max_{x,y \in \Omega^K}Pr[T^{x,y} > t]  \tag{by Lemma \ref{lem:coupling}} \\
&\leq \sum_{i=t}^{\infty} \left(1-H_1^*/ |U|\right)^i  H_1^*/|U| \notag \\
&\leq (1 - H_1^*/|U|)^t \notag \\
&\leq \exp\left(-tH_1^*/|U|\right) \notag
\end{align}
which proves the theorem.
\end{proof}

\balancecolumns
\end{document}